\documentclass[12pt,reqno]{amsart}
\usepackage{amsmath,amssymb,amsfonts,amsthm}

\textheight=684pt 

\textwidth=444pt  

\topmargin=-20pt  

\hoffset=-45pt    

\usepackage{amsmath, euscript, amssymb, amsfonts}
\usepackage{mathrsfs}

\theoremstyle{plain}
\newtheorem{theorem}{Theorem}
\newtheorem{lemma}[theorem]{Lemma}
\newtheorem{corollary}[theorem]{Corollary}
\newtheorem{proposition}[theorem]{Proposition}

\theoremstyle{remark}
\newtheorem{remark}[theorem]{Remark}


%

\numberwithin{equation}{section}
\numberwithin{theorem}{section}

\newcommand{\mS}{{\mathscr{S}}}
\newcommand{\mE}{{\mathscr{E}}}

\newcommand{\mM}{{\mathscr{M}}}

\newcommand{\Op}{\mathop{\mathrm{Op}}\nolimits}

\newcommand{\coloneq}{\mathrel{\mathop:}=}

\newcommand{\oR}{{\mathbb R}}

\newcommand{\oZ}{{\mathbb Z}}
\newcommand{\oN}{{\mathbb N}}

\newcommand{\Sab}{{S^{\{b\}}_{\{a\}}}}
\newcommand{\Sba}{{S_{\{b\}}^{\{a\}}}}
\newcommand{\sab}{{S^{(b)}_{(a)}}}
\newcommand{\sba}{{S_{(b)}^{(a)}}}

\newcommand{\dSab}{{S^{\{b\}\prime}_{\{a\}}}}
\newcommand{\dSba}{{S^{\{a\}\prime}_{\{b\}}}}
\newcommand{\dsab}{{S^{(b)\prime}_{(a)}}}
\newcommand{\dsba}{{S^{(a)\prime}_{(b)}}}

\newcommand{\Eab}{{\mathcal E^{\{b\}}_{(a)}}}
\newcommand{\Eabp}{{\mathcal E^{(b)}_{\{a\}}}}
\newcommand{\Eba}{{\mathcal E^{\{a\}}_{(b)}}}
\newcommand{\Ebap}{{\mathcal E^{(a)}_{\{b\}}}}
\newcommand{\dEba}{{\mathcal E^{\{a\}\prime}_{(b)}}}
\newcommand{\dEbap}{{\mathcal E^{(a)\prime}_{\{b\}}}}

\newcommand{\eab}{{\breve{\mathcal E}^{\{b\}}_{(a)}}}
\newcommand{\eabp}{{\breve{\mathcal E}^{(b)}_{\{a\}}}}
\newcommand{\deab}{{\breve{\mathcal E}^{\{b\}\prime}_{(a)}}}
\newcommand{\deabp}{{\breve{\mathcal E}^{(b)\prime}_{\{a\}}}}
\newcommand{\deba}{{\breve{\mathcal E}^{\{a\}\prime}_{(b)}}}
\newcommand{\debap}{{\breve{\mathcal E}^{(a)\prime}_{\{b\}}}}

\begin{document}

\baselineskip=20pt

\title[Inclusion theorems for the Moyal multiplier algebras]{
 Inclusion theorems for the Moyal multiplier algebras of  generalized Gelfand-Shilov spaces}

\author{M.~A.~Soloviev}
\address{I.~E.~Tamm Department of Theoretical Physics, Lebedev
Physical Institute of the Russian Academy of Sciences,
 Leninsky Prospect 53, Moscow 119991, Russia} \email{soloviev@lpi.ru}
\thanks{}
\subjclass[2010]{53D55, 43A22, 35S05, 46F05}
\keywords{deformation quantization, Weyl symbols, Weyl-Moyal product, multiplier algebras, Gelfand-Shilov spaces, pseudodifferential operators}

\begin{abstract}
 We prove that the Moyal multiplier algebras of the generalized  Gelfand-Shilov spaces of type $S$  contain Palamodov spaces of type $\mathscr E$ and the  inclusion maps are continuous.  We also give a direct proof that the Palamodov spaces are algebraically and topologically isomorphic to the strong duals of the spaces of convolutors for the corresponding spaces of type $S$. The obtained results  provide an  effective way to describe the properties of pseudodifferential operators with symbols in the spaces of type $\mathscr E$.
\end{abstract}

\maketitle
\section{Introduction}
\label{S1}

The   Gelfand-Shilov spaces of type $S$  provide a natural framework for studying  infinite order pseudodifferential operators whose symbols   have faster than polynomial
growth at infinity. Various classes of pseudodifferential operators of this kind were investigated in recent papers~\cite{ACT,CPP,CT,P}
with special attention to their continuity, composition, and invariance properties.
Similar issues arise in the context of noncommutative quantum field theory, where spaces of type $S$ were used to characterize the violations of locality and causality~\cite{Ch,S2007-II} and to analyze the behavior of propagators in some noncommutative models~\cite{FS,FS11,Zahn}. It is crucial for these applications that, under natural restrictions specified in~\cite{S2007},
 the Gelfand-Shilov spaces of functions  on the linear symplectic space $\oR^{2d}$ are algebras under the Weyl-Moyal product
\begin{equation}
(f\star_\hbar g)(x)=
(\pi\hbar)^{-2d}\int_{\oR^{4d}} f(x-x')g(x-x^{\prime\prime})e^{(2i/\hbar)x'\cdot
Jx^{\prime\prime}}dx'dx^{\prime\prime},
 \label{1.1}
\end{equation}
where  $J=\begin{pmatrix} 0&I_d\\-I_d&0\end{pmatrix}$  is the standard symplectic matrix,
 $x=(q,p)$,  $x'\!\cdot\! Jx''= q'p''-q'' p'$, and $\hbar$ is the Planck constant.
In this and previous papers~\cite{S2019,S2019-2,S2020}, we study the   algebras of multipliers of   generalized Gelfand-Shilov spaces with respect to the noncommutative product~\eqref{1.1}.
 Their importance is due to the fact that these algebras extend this operation to the maximum possible class of functions, including some elements of the  duals of spaces of type $S$.
  From the viewpoint of the Weyl symbol calculus, the multiplier  algebras consist of the symbols of the operators that map the corresponding Fourier-invariant spaces of type $S$ continuously into itself. These algebras generalize the Moyal multiplier algebra
  $\mathscr M_\hbar(S)$ for the Schwartz space $S(\oR^{2d})$ of all infinitely differentiable rapidly decreasing functions.
  The algebra  $\mathscr M_\hbar(S)$ has been studied in many papers starting from~\cite{A1,G-BV1,Mail,G-BV2} and its applications to quantum field theory on noncommutative spaces have been discussed in~\cite{G2004,G-BV3}.  We treat the product~\eqref{1.1} as a deformation of  the ordinary pointwise product and use the notation  $\star_\hbar$, accepted in  deformation quantization theory,  instead of the notation $\#$ which is used for the composition  of  Weyl symbols in~\cite{Fol,G2001,H3} and corresponds to $\hbar=1$.

  Typically, applications use Gelfand-Shilov spaces of a particular type, denoted in~\cite{GS2} by $S^\beta_\alpha$. The algebras of their poitnwise multipliers have been explicitly described by Palamodov~\cite{P1962} in terms of  spaces of type  $\mathscr E$. The noncommutative deformation violates the equalities established in~\cite{P1962}, but preserves some inclusion relations which are  the subject of our study.  The starting point for us is that the Moyal multiplier  algebras of the spaces of type $S$ contain the duals of their convolutor spaces.
   This inclusion was proved  for   $S^\beta_\alpha$ in~\cite{S2011} and  holds true in the general case, as shown in~\cite{S2019, S2020}.
  The spaces of convolutors for Gelfand-Shilov spaces were studied in~\cite{DPV,DPPV} and, most thoroughly, by Debrouwere and Vindas in~\cite{DV2018,DV2019}, where the short-time Fourier transform and a projective description of inductive limits were used for this purpose.    Here we develop
   an alternative  approach based on the  continuous extension of the convolutors of  spaces of type $S$ to the corresponding spaces of type  $\mathscr E$. A simple and natural way of such extension was  proposed earlier in~\cite{S2012-I} and was also used  in~\cite{S2020}.  We show that this approach provides a complete characterization of the convolutor spaces for the generalized spaces of type $S$. Significantly, it is also well suited for the cases that have not yet been considered,
   for example, where a space of type $S$ is nontrivial, whereas its projective counterpart is trivial.     In combination with Theorem~4 of~\cite{S2019-2}, this approach provides a direct and simple proof of the continuous embedding of  generalized Palamodov spaces of type $\mathscr E$ into the Moyal multipliers algebras of the corresponding spaces of type $S$, which is the main result of this paper.

 The paper is structured as follows. In Section~\ref{S2}, we give the basic definitions concerning spaces of type $S$ and  type $\mathscr E$ and  introduce the notation.
 We try to follow the original notation in~\cite{GS2,P1962} and let  $a_n$  and $b_n$ denote the sequences defining the function spaces and specifying the behavior at infinity and the degree of smoothness of their elements. But instead of the notation $S^{b_n}_{a_k}$ introduced in~\cite{GS2}, we use $S^{\{b\}}_{\{a\}}$, where the curly brackets mean that this space is the inductive limit of a family of normed spaces. The projective limit of the same family is denoted by  $S^{(b)}_{(a)}$. Such a rule was used  by Komatsu~\cite{K1973} and  in many subsequent papers, and we apply it also to the spaces of type~$\mathscr E$.  In Section~\ref{S3}, we prove that the convolutor spaces for  $S^{\{b\}}_{\{a\}}$ and  $S^{(b)}_{(a)}$  contain respectively the duals of the Palamodov spaces  $\eab$ and $\eabp$ and that these inclusions are continuous. In Section~\ref{S4}, we give a complete characterization of the convolutor spaces  $C\bigl(S^{\{b\}}_{\{a\}}\bigr)$ and $C\bigl(S^{(b)}_{(a)}\bigr)$ in terms of the spaces of type $\mathscr E$.
 In particular, we show that  $\eab$ is canonically isomorphic to the strong dual of  $C\bigl(S^{\{b\}}_{\{a\}}\bigr)$ and  $\eabp$ is canonically isomorphic to the strong dual of  $C\bigl(S^{(b)}_{(a)}\bigr)$. In Section~\ref{S5}, we define the left, right, and  two-sided multipliers for the noncommutative algebras $\bigl(\Sab,\star_\hbar\bigr)$ and $\bigl(\sab,\star_\hbar\bigr)$ and prove that these algebras have approximate identities. This allows us to prove the equivalence of  three different definitions of their corresponding multiplier algebras. In Section~\ref{S6}, we show that  $\eab$ is continuously embedded in the algebra   ${\mathscr M}_\hbar\bigl(\Sab\bigr)$  of two-sided Moyal multipliers for $\Sab$ and $\eabp$ is continuously embedded in ${\mathscr M}_\hbar\bigl(\sab\bigr)$. In the same section we extend these theorems   to other translation invariant star products. Additional inclusion relations are established in the case of the Fourier-invariant spaces $S_{\{a\}}^{\{a\}}$ and $S_{(a)}^{(a)}$. Section~\ref{S7} contains concluding remarks.

\section{Preliminaries and notation}
\label{S2}
    Let $a=(a_n)_{n\in\oZ_+}$    be a sequence  of positive numbers such that
\begin{equation}
a_0=1,\qquad a_{n+1}\ge a_n,
 \label{2.1}
 \end{equation}
 \begin{equation}
a_n^2\le a_{n-1} a_{n+1},
 \label{2.2}
 \end{equation}
  \begin{equation}
 a_{k+n}\le K H^{k+n}a_ka_n,
 \label{2.3}
 \end{equation}
where $K$ and  $H$ are  positive constants.  The logarithmic convexity condition~\eqref{2.2} coupled with the normalization condition $a_0=1$ implies the inequality
\begin{equation}
a_ka_n\le a_{k+n},
 \label{2.4}
 \end{equation}
 which will be also  used throughout the paper. Let $b=(b_n)_{n\in\oZ_+}$
     be another sequence  of positive numbers satisfying the same conditions.
 The Gelfand-Shilov space $\Sab(\oR^d)$ consists of all infinitely differentiable functions $f(x)$ defined on $\oR^d$ and satisfying the inequalities
 \begin{equation}
 |x^\alpha\partial^\beta f(x)|\le C A^{|\alpha|}B^{|\beta|} a_{|\alpha|}b_{|\beta|}\qquad \forall \alpha,\beta\in \oZ_+^d,
 \notag
 \end{equation}
   where $C$, $A$, and  $B$ are positive constants depending on $f$,
  $\oZ^d_+$  is  the set of $d$-tuples of nonnegative integers,
  and the standard multi-index notation is used.  In what follows, we write for brevity  $\Sab$ instead of $\Sab(\oR^d)$ when this cannot cause confusion. This space is the union of a family of Banach spaces $\{S^{b,B}_{a,A}\}_{A,B>0}$ whose norms are given by
 \begin{equation}
 \|f\|_{A,B}= \sup_{x,\alpha,\beta} \frac{|x^\alpha\partial^\beta f(x)|}{A^{|\alpha|}B^{|\beta|} a_{|\alpha|}b_{|\beta|}},
 \label{2.5}
 \end{equation}
 and its topology is defined to be the inductive limit topology with respect to the
 inclusion maps   $S^{b,B}_{a,A}\to \Sab$. The most frequently used
Gelfand-Shilov spaces $S^\beta_\alpha$ are defined in~\cite{GS2} by sequences of the form
\begin{equation}
a_n=n^{\alpha n},\quad b_n=n^{\beta n},
\label{2.6}
 \end{equation}
  where in this case $\alpha$ and $\beta$ are nonnegative numbers, which should not be confused with the multi-indices in~\eqref{2.5}.     We will also consider the spaces
\begin{equation}
\sab=\bigcap_{A\to0,B\to0} S^{b,B}_{a,A}
\label{2.7}
 \end{equation}
 equipped with the projective limit topology.
  If $\varliminf_{n\to\infty}a_n^{1/n}=0$, then the spaces $\Sab$ and $\sab$   are trivial, i.e., contain only the identically zero function. If $0<\varliminf_{n\to\infty}a_n^{1/n}<\infty$, then all functions in $\Sab$ are of compact support, and this space coincides with the space defined by $b_n$
  and  $a_n\equiv 1$, which is usually denoted by  $\mathcal D^{\{b\}}$.  The space   $\sab$ is trivial in this case, and  considering the spaces~\eqref{2.7}  we always assume that
 \begin{equation}
\lim_{n\to\infty}a_n^{1/n}=\infty.
\label{2.8}
 \end{equation}
 There are also other non-triviality conditions for the spaces of type $S$, see~\cite{GS2}. Their precise description is not needed for what follows, but we assume throughout the paper that the spaces under consideration are nontrivial. A non-quasianalyticity condition is often imposed on
  $b_n$ to   ensure that the space contains  sufficiently many functions of compact support (see~\cite{GS2,K1973}), but this condition is not used in the proofs given below.
 The Fourier transformation
\begin{equation}
F\colon f(x)\to
\widehat f(\zeta)=(2\pi)^{-d/2}\int_{\oR^d}e^{-ix\cdot\zeta}f(x)dx
 \notag
 \end{equation}
 maps $\Sab$ isomorphically onto $\Sba$ and maps $\sab$ isomorphically onto $\sba$. Using~\eqref{2.4}, it is easy to see that $\Sab$ and $\sab$ are   algebras under pointwise multiplication and that this operation is  continuous in their topologies. As a consequence, these spaces are also topological algebras under convolution.
The norm~\eqref{2.5} can be written as
 \begin{equation}
 \|f\|_{A,B}= \sup_{x,\beta}\frac{w_a(|x|/A)|\partial^\beta f(x)|}{B^{|\beta|} b_{|\beta|}},
 \label{2.9}
 \end{equation}
 where $|x|=\max\limits_{i\le j\le d}|x_j|$ and
 \begin{equation}
 w_a(t)\coloneq\sup_{n\in\oZ_+}\frac{t^n}{a_n},\qquad t\ge 0.
\label{2.10}
 \end{equation}
 This function is often called a weight function. We note in this connection that the replacement  $\sup_x|\cdot|\to\|\cdot\|_{L^1}$ in~\eqref{2.9} gives an equivalent system of norms  (see, e.g,  Lemma A.2 in~\cite{S2019} for a proof) and then $w_a(t)$ plays the role of a weight in  the integral.
  Under  condition~\eqref{2.8},  the function $w_a(t)$ is finite and continuous. If $a_n\equiv 1$, then its corresponding function $w_1(t)$ is equal to  1 for $0\le t\le1$  and is infinite for $t>1$, i.e.,
  $1/w_1(t)$ is the characteristic function of the interval  $[0,1]$. It follows from the definition and from~\eqref{2.1} that $w_a(t)\ge 1$ and  that this function is convex and monotonically increases faster than  $t^n$ for any $n$. Therefore,
  \begin{equation}
 w_a\left(\frac{t_1+ t_2}{A_1+A_2}\right)\le  w_a\left(\frac{t_1}{A_1}\right)\, w_a\left(\frac{t_2}{A_2}\right)
\label{2.11}
 \end{equation}
  for any positive $A_1$ and $A_2$. In particular, $w_a((t_1+ t_2)/2)\le  w_a(t_1)\, w_a(t_2)$.  Setting $t_1=|x|$ and $t_2=|y|$, we obtain the inequality
   \begin{equation}
 w_a\left(\frac12|x+y|\right)\le  w_a(|x|)\, w_a(|y|),
\label{2.12}
 \end{equation}
 which is most often used below.  The condition~\eqref{2.3} implies that
  \begin{equation}
 w_a(t)^2\le K w_a(Ht).
\label{2.13}
 \end{equation}

 Along with the spaces of rapidly decreasing functions, we will consider spaces of rapidly increasing functions with the same degree of smoothness. Namely,
 let $\mathcal E^{b,B}_{a,A}$ be the Banach space of all functions  with the finite norm\footnote{In the case where  $a_n\equiv 1$, the functions  in $\mathcal E^{b,B}_{a,A}$  are defined only for  $|x|\le A$ and are regarded as zero if they are zero in this domain.}
  \begin{equation}
 \|h\|^A_B= \sup_{x,\beta} \frac{|\partial^\beta h(x)|}{w_a(|x|/A)\,B^{|\beta|} b_{|\beta|}}.
 \label{2.14}
 \end{equation}
 Using the family $\{\mathcal E^{b,B}_{a,A}\}_{A,B>0}$,   we define the spaces
 \begin{gather}
 \Eab\coloneq \bigcap_{A\to\infty}\bigcup_{B\to\infty}\mathcal E^{b,B}_{a,A},\qquad
 \Eabp\coloneq \bigcap_{B\to0}\bigcup_{A\to0}\mathcal E^{b,B}_{a,A},
 \label{2.15}\\
 \eab\coloneq \bigcup_{B\to\infty}\bigcap_{A\to\infty}\mathcal  E^{b,B}_{a,A},\qquad
 \eabp\coloneq \bigcup_{A\to0}\bigcap_{B\to0}\mathcal  E^{b,B}_{a,A},
 \label{2.16}
 \end{gather}
 where the intersections are endowed with the projective limit topology and the unions are endowed with the inductive limit topology.
 The spaces~\eqref{2.15} and~\eqref{2.16} play the same role in the theory of ultradistributions defined on  $\Sab$ and $\sab$  as the spaces $\mathcal O_M$ and $\mathcal O_C$ in the Schwartz theory of tempered distributions~\cite{Schwartz}. They were introduced by Palamodov~\cite{P1962} (with somewhat different notation) for the case~\eqref{2.6}  and were called spaces of type   $\mE$. This terminology is quite natural because the notation   $\mE(\oR^d)$ was used by Schwartz for the space of all infinitely differentiable functions on $\oR^d$.
 In the case where $a_n\equiv 1$, we write $\mathcal E^{\{b\}}$ instead of  $\mathcal E^{\{b\}}_{(1)}$ and $\breve{\mathcal E}^{\{b\}}$ instead of  $\breve{\mathcal E}^{\{b\}}_{(1)}$, which is in agreement with  the notation in~\cite{K1973}.
  Obviously, we have the  continuous inclusions
 \begin{equation}
\eab\hookrightarrow \Eab,\qquad  \eabp\hookrightarrow \Eabp.
\label{2.17}
 \end{equation}
 It should be noted  that the spaces $\mathcal O_C^{\{M_p\},\{A_p\}}(\oR^d)$ and $\mathcal O_C^{(M_p), (A_p)}(\oR^d)$ considered in~\cite{DV2019}  are respectively $\breve{\mathcal E}^{\{M\}}_{(A)}$ and  $\breve{\mathcal E}^{(M)}_{\{A\}}$ in our notation, and the symbol classes $\Gamma_s^\infty(\mathbf R^d)$ and $\Gamma_{0,s}^\infty(\mathbf R^d)$ studied in~\cite{CT}  coincide respectively with $\breve{\mathcal E}^{\{a\}}_{(a)}$ and  $\breve{\mathcal E}^{(a)}_{\{a\}}$, where $a_n=n^{sn}$. The spaces  $S_{\{a\}}^{\{a\}}$ and $S_{(a)}^{(a)}$ with $a_n=n^{sn}$ are  denoted in~\cite{CT} by $\mathcal S_s(\mathbf R^d)$ and $\Sigma_s(\mathbf R^d)$.   We also note that the notation  in~\cite{S2019,S2019-2,S2020} is related to the notation used here as follows: $S^b_a\leftrightarrow \Sab$, $\mS^b_a\leftrightarrow \sab$,  $E^b_a \leftrightarrow\Eab$, $\mE^b_a\leftrightarrow\Eabp$,  $\breve E^b_a \leftrightarrow\eab$, $\breve \mE^b_a\leftrightarrow\eabp$, $a(t)\leftrightarrow w_a(t)$.

 The spaces of type  $S$ and type $\mE$ have nice topological properties which follow from the fact that the  inclusion maps $S^{b,B}_{a,A}\hookrightarrow S^{b,\bar B}_{a,\bar A}$ and   $\mathcal E^{b,B}_{a,\bar A}\hookrightarrow \mathcal E^{b,\bar B}_{a, A}$, where $A<\bar A$ and  $B<\bar B$, are compact. This is well known for the spaces of type   $S$ and is proved in the same manner as in~\cite{GS2}  for $S^\beta_\alpha$. In the case of spaces of type  $\mE$,  the compactness of the inclusion maps can also be proved analogously, see, e.g., Lemma~2 in~\cite{S2019-2}. It follows that
  $\sab$  is an  (FS)-space (Fr\'echet-Schwartz space) and   $\Sab$ belongs to the dual class of (DFS)-spaces (see~\cite{K1967,MV1997} for the basic properties of (FS) and (DFS)-spaces). In particular, these space are complete, barrelled, reflexive, and  Montel. An important consequence is that the inductive limit   $\Sab$ is regular, i.e., every bounded subset of it is contained and bounded in some  $S^{b,B}_{a,A}$. We let $\dSab$ and $\dsab$ denote the dual spaces of
  $\Sab$ and  $\sab$ and assume that the duals are  endowed with the strong topology. Then the first of them is  an (FS)-space and the second is a (DFS)-space.  The inductive system of spaces $\mathcal E_{(a)}^{b,B}=\bigcap_{A\to\infty}\mathcal E^{b,B}_{a,A}$,  $B>0$, is equivalent to the system $\mathcal E_{(a)}^{b,B+}=\bigcap_{A\to\infty,\epsilon\to0}\mathcal E^{b,B+\epsilon}_{a,A}$, because there are continuous inclusions $\mathcal E_{(a)}^{b,B}\subset\mathcal E_{(a)}^{b,B+}\subset \mathcal E_{(a)}^{b,\bar B}$ for $B<\bar  B$. In turn, the inductive system $\mathcal E_{a,A}^{(b)}=\bigcap_{B\to0}\mathcal E^{b,B}_{a,A}$, $A>0$, is equivalent to the system $\mathcal E_{a,A-}^{(b)}=\bigcap_{B\to0, \epsilon\to0} E^{b,B}_{a,A-\epsilon}$.
As a consequence, the projective system of the duals   $\bigl(\mathcal E_{(a)}^{b,B}\bigr)'$ is equivalent to the system $\bigl(\mathcal E_{(a)}^{b,B+}\bigr)'$ and the projective system $\bigl(\mathcal E_{a,A}^{(b)}\bigr)'$ is equivalent to the system $\bigl(\mathcal E_{a,A-}^{(b)}\bigr)'$.
 By Lemma~2 in~\cite{S2019-2},  $\mathcal E_{(a)}^{b,B+}$ and $\mathcal E_{a,A-}^{(b)}$ are (FS)-spaces and their strong duals are hence (DFS)-spaces. The representations
 \begin{equation}
 \eab=\varinjlim_{B\to\infty}\mathcal E_{(a)}^{b,B+} ,\qquad  \eabp=\varinjlim_{A\to0}\mathcal E_{a,A-}^{(b)}
\label{2.18}
 \end{equation}
 play an essential role in our study.

For any topological vector space $E$, we let $\mathcal L(E)$ denote the space of all continuous linear maps  $E\to E$, equipped with the topology of uniform convergence on bounded subsets.   If $E$  is an   (FS)-space or a  (DFS)-space, then  $\mathcal L(E)$ is complete (see  Sect.~39.6 in~\cite{K1979}). If $E$ is a test-function space on  $\oR^d$,  then a functional $u\in E'$  is called a convolutor for $E$ if the function
 \begin{equation}
 (u\ast f)(x)= \langle u,f(x-\cdot)\rangle
 \notag
 \end{equation}
    belongs to $E$ for any $f\in E$ and the map  $f\to u\ast f$ is continuous on $E$.  In the case of the spaces of type $S$,
 the continuity condition is automatically satisfied by the closed graph theorem
   (Theorem~8.5 in Ch.~IV in~\cite{Sch}).  The set of all convolutors for  $E$  is denoted by $C(E)$ and is equipped with the topology induced by that of $\mathcal L(E)$. The Fourier transformation maps  $C\bigl(\Sab\bigr)$ and $C\bigl(\sab\bigr)$ isomorphically onto the spaces of pointwise multipliers for   $\Sba$ and $\sba$, which  we respectively denote by $M\bigl(\Sba\bigr)$ and  $M\bigl(\sba\bigr)$.

\section{The duals of  spaces of type $\mathscr E$ as spaces of convolutors  }
\label{S3}

Clearly, we have the canonical continuous inclusions
\begin{equation}
 \Sab\hookrightarrow  \eab, \qquad  \sab\hookrightarrow  \eabp.
 \label{3.1}
 \end{equation}

\begin{proposition}
\label{P3.1}
  If the spaces $\Sab$ and $\sab$ are nontrivial, then the inclusion maps~\eqref{3.1} have dense ranges.
\end{proposition}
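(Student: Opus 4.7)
The plan is to approximate every $h\in\eab$ by products $\chi_n h\in\Sab$, where $(\chi_n)_{n\in\oN}$ is a scaled version of a single $\Sab$-function chosen so that $\chi_n\to 1$ in the topology of $\eab$. To construct $\chi_n$, I would first use the nontriviality of $\Sab$ together with its translation invariance (a standard consequence of the conditions on $a_n,b_n$) to produce $\rho\in\Sab$ with $\rho(0)=1$: pick any nonzero $f\in\Sab$, translate so that $f$ does not vanish at the origin, and normalize. Then set $\chi_n(x)\coloneq\rho(x/n)$; a direct scaling argument places $\chi_n\in\Sab$.

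Next I would verify that $\chi_n\to 1$ in $\eab$. For any $A,B>0$ and any multi-index $\beta$ the key quantity is $|\partial^\beta(\chi_n-1)(x)|/(w_a(|x|/A)B^{|\beta|}b_{|\beta|})$. For $\beta=0$, on a ball $|x|\le n\delta$ (with $\delta$ supplied by the continuity of $\rho$ at the origin) the equality $\rho(0)=1$ renders $|1-\rho(x/n)|$ arbitrarily small, while on the complement the factor $1/w_a(|x|/A)\le 1/w_a(n\delta/A)$ tends to $0$ by the unboundedness of $w_a$ under~\eqref{2.8}. For $|\beta|\ge 1$, the chain rule produces a factor $n^{-|\beta|}$, and combining it with the Gelfand-Shilov bounds for $\rho$ shows the relevant supremum is bounded by $C_\rho B_\rho/(nB)$ for $n$ large. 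Thus $\|\chi_n-1\|^A_B\to 0$ for every $A,B>0$, and in view of~\eqref{2.18} this gives $\chi_n\to 1$ in $\eab$.

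It then remains to show that $\chi_n h\in\Sab$ and $\chi_n h\to h$ in $\eab$. For the first, I would apply the Leibniz rule to $\partial^\beta(\chi_n h)$ and combine the decay bound $|\partial^\gamma\chi_n(x)|\le C_\rho (B_\rho/n)^{|\gamma|}b_{|\gamma|}/w_a(|x|/(nA_\rho))$ with the growth bound $|\partial^{\beta-\gamma}h(x)|\le C_{A_h,\epsilon}(B_0+\epsilon)^{|\beta-\gamma|}b_{|\beta-\gamma|}w_a(|x|/A_h)$. Since $h\in\mathcal E^{b,B_0+}_{(a)}$ means $h$ lies in every $\mathcal E^{b,B_0+\epsilon}_{a,A_h}$, I am free to take $A_h$ as large as needed, in particular of order $HnA_\rho$ where $H$ is the constant of~\eqref{2.3}. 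The consequence~\eqref{2.13} of~\eqref{2.3} then yields $w_a(|x|/A_h)\,w_a(|x|/A)\le K\,w_a(|x|/(nA_\rho))$ for a compatible $A>0$, so that the quotient $w_a(|x|/A_h)/w_a(|x|/(nA_\rho))$ is dominated by the decay factor $K/w_a(|x|/A)$. Combined with the elementary inequality $|x|^{|\alpha|}\le A^{|\alpha|}a_{|\alpha|}w_a(|x|/A)$, this places $\chi_n h$ in some $S^{b,B}_{a,A}$, hence in $\Sab$. The convergence $\chi_n h\to h$ in $\eab$ then follows either from a direct Leibniz estimate of $(\chi_n-1)h$ structured exactly as above, or, more cleanly, from the continuity of pointwise multiplication on $\eab$, which is proved by the same Leibniz-plus-\eqref{2.13} scheme. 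The case $\sab\hookrightarrow\eabp$ is handled identically, with $\rho\in\sab$ provided by nontriviality and~\eqref{2.8}, and all estimates passing through the inductive-projective limits defining $\eabp$.

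The most delicate point will be the parameter matching in the product estimate: the single scaling parameter $n$ must simultaneously push $\chi_n$ to $1$ on growing balls, drive the higher derivatives of $\chi_n$ to zero, and, through~\eqref{2.13}, dominate the growth weight $w_a(|x|/A_h)$ of $h$ by choosing $A_h$ of the right order in $n$. It is this triple role of the scaling, together with the upper-bound submultiplicativity supplied by~\eqref{2.13}, that makes $\chi_n(x)=\rho(x/n)$ the natural approximating sequence and allows the argument to go through without relying on non-quasianalyticity.
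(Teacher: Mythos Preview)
Your approach is correct but takes a genuinely different route from the paper's. The paper never uses dilated cutoffs; instead it fixes $f_0$ in the test space with $\int f_0=1$ and, for a given $h$ in the target space, forms the Riemann sums
\[
s_n(x)=\sum_{\alpha\in\oZ^d,\,|\alpha|\le n^2} h(x)\,f_0(\alpha/n-x)\,n^{-d}
\]
for the integral $\int h(x)f_0(\xi-x)\,d\xi=h(x)$. Each summand $h(\cdot)f_0(\xi-\cdot)$ lies in the test space, and a uniform estimate of type~\eqref{4.7*} shows the family is bounded with an integrable weight in $\xi$; hence $\langle u,s_n\rangle$ converges for every $u$ in the dual of the Montel space $\mathcal E^{(b)}_{a,A/H-}$ (resp.\ $\mathcal E^{b,(B+B_0)+}_{(a)}$), so $s_n$ is weakly Cauchy there, and the Montel property upgrades this to strong convergence to $h$. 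This argument is not isolated: it is embedded in the proofs of Theorems~\ref{T4.1} and~\ref{T4.3}, where the very same Riemann-sum device is first used to show that the extension $\tilde u$ of a convolutor coincides with $u$ on the original space.

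Your dilation argument $\chi_n(x)=\rho(x/n)$ is more elementary: the convergence $\chi_n\to 1$ is obtained directly in every norm $\|\cdot\|^A_B$, with no weak-to-strong step and no appeal to the Montel property, and $\chi_n h\to h$ then follows from the separate continuity of pointwise multiplication (which, as you note, is a straightforward Leibniz-plus-\eqref{2.13} estimate). The paper's route is economical \emph{within} the paper, since density falls out of machinery already needed for the extension theorems; your route is self-contained and would stand on its own as a proof of Proposition~\ref{P3.1} without any reference to convolutors.
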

 This was proved in~\cite{S2019} and we reproduce the proof below in the course of proving Theorems~\ref{T4.1} and~\ref{T4.3}.
 It follows that the adjoint maps  $\deab\to \dSab$  and $\deabp\to \dsab$  are injective.
  Hence $\deab$  and  $\deabp$ are naturally identified with  vector subspaces of $\dSab$ and  $\dsab$, respectively.

\begin{proposition}
\label{P3.2}
  The space $\deab$  is contained in the convolutor space   $C\bigl(\Sab\bigr)$ and   $\deabp$ is contained in $C\bigl(\sab\bigr)$.
  \end{proposition}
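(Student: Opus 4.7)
The plan is to show that for $u\in\deab$ the candidate convolution $(u\ast f)(x)\coloneq\langle u,f(x-\cdot)\rangle$ defines a continuous linear map $\Sab\to\Sab$; the parallel statement $\deabp\subset C(\sab)$ will follow from the same calculation with the inductive and projective parameters exchanged, modulo extra bookkeeping.

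The starting point would be a quantitative translate estimate: for $f\in S^{b,B_0}_{a,A_0}$ and every $A>0$,
\[
 \|\tau_x f\|^A_{B_0}\le\|f\|_{A_0,B_0}/w_a(|x|/(A+A_0)),\qquad \tau_x f\coloneq f(x-\cdot),
\]
obtained by inserting the defining bound on $|\partial^\beta_y f(x-y)|$ into the $\mathcal E$-norm and absorbing the weight $w_a(|y|/A)$ via~\eqref{2.11} applied with $(t_1,A_1)=(|y|,A)$ and $(t_2,A_2)=(|x-y|,A_0)$, together with the triangle inequality $|y|+|x-y|\ge|x|$ and monotonicity of $w_a$. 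Since this bound holds uniformly in $A$, one already has $\tau_x f\in\bigcap_A\mathcal E^{b,B_0}_{a,A}\subset\eab$, so the pairing $\langle u,\tau_x f\rangle$ is defined.

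Next I would differentiate under the pairing to obtain $\partial^\beta(u\ast f)(x)=\langle u,\tau_x(\partial^\beta f)\rangle$, justified by the $C^\infty$-dependence of $\tau_x f$ on $x$ in the $\eab$-topology (a standard difference-quotient argument, applied to Taylor remainders of $f$ and controlled by the translate estimate above). The dual description~\eqref{2.18} of $(\eab)'$ then supplies, for any chosen $B_1>0$, parameters $A_1=A_1(B_1)$ and $C_u>0$ with $|\langle u,\varphi\rangle|\le C_u\|\varphi\|^{A_1}_{B_1}$ on $\mathcal E^{b,B_1}_{a,A_1}$. Inserting $\varphi=\tau_x(\partial^\beta f)$, splitting $b_{|\beta|+|\gamma|}\le KH^{|\beta|+|\gamma|}b_{|\beta|}b_{|\gamma|}$ via~\eqref{2.3}, and choosing $B_1>HB_0$ so that $\sup_\gamma(HB_0/B_1)^{|\gamma|}\le1$, I arrive at
\[
 |\partial^\beta(u\ast f)(x)|\le C\,\|f\|_{A_0,B_0}(HB_0)^{|\beta|}b_{|\beta|}/w_a(|x|/(A_0+A_1)),
\]
i.e., $u\ast f\in S^{b,HB_0}_{a,A_0+A_1}\subset\Sab$ with continuous dependence on $f$.

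For the projective case the same estimates apply, but now~\eqref{2.18} supplies, for each chosen $A_1>0$, a threshold $B_1(A_1)>0$ and a continuity bound on $\mathcal E^{b,B_1(A_1)}_{a,A_1}$. Given target parameters $A',B'>0$, I would pick any $A_1\in(0,A')$, set $A_0=A'-A_1$, and then $B_0<\min(B'/H,B_1(A_1)/H)$; since $f\in\sab$ lies in the corresponding $S^{b,B_0}_{a,A_0}$ for arbitrarily small $A_0,B_0$, the same calculation lands $u\ast f$ in $S^{b,HB_0}_{a,A'}\subset S^{b,B'}_{a,A'}$. The main obstacle, most visible here, is precisely this matching of mutually constrained parameters---the three conditions $B_1>HB_0$, $HB_0\le B'$, and $A_0+A_1\le A'$ must be met simultaneously for arbitrary $(A',B')$---which is routine once one uses the freedom to shrink $A_0,B_0$ in $\sab$ and the strict inequality $A_1<A'$. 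All the analytic content sits in the translate estimate and the index-splitting of $b$; the rest is bookkeeping.
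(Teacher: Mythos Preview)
Your proposal is correct and follows essentially the same route as the paper's proof: both estimate $\|\partial^\beta f(x-\cdot)\|^A_B$ by combining the weight inequality~\eqref{2.11} with the splitting $b_{|\beta+\gamma|}\le KH^{|\beta+\gamma|}b_{|\beta|}b_{|\gamma|}$ from~\eqref{2.3}, arriving at $u\ast f\in S^{b,HB_0}_{a,A+A_0}$ with the bound $\|u\ast f\|_{A+A_0,HB_0}\le K\|u\|^A_B\|f\|_{A_0,B_0}$, and then handle the projective case by the same estimate with $B_0\le B/H$. The paper's presentation is marginally more direct---it goes straight to the $\partial^\beta$-translate estimate rather than first stating an undifferentiated version, and it reads off the continuity bound for $u$ from the definition~\eqref{2.16} rather than invoking~\eqref{2.18}---but the analytic content is the same.
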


\begin{proof}  Standard arguments show that for any  $u\in  \dSab$ and $f\in \Sab$, the convolution  $(u\ast f)(x)$ is infinitely differentiable and $\partial^\beta(u\ast f)=u\ast\partial^\beta f$ (see Lemma~A.4 in~\cite{S2019}). If  $u$ belongs to $\deab$, then it is continuous on every space $\mathcal E^{b,B}_{(a)}=\varprojlim_{A\to\infty}\mathcal E^{b,B}_{a,A}$, $B>0$, and is bounded in some norm  $\|\cdot\|^A_B$, where $A$ depends on $B$. Hence,
\begin{equation}
 |\partial^\beta(u\ast f)(x)|\le \|u\|^A_B\, \|\partial^\beta f(x-\cdot)\|^A_B.
 \label{3.2}
 \end{equation}
 Let $f\in S^{b,B_0}_{a,A_0}$ and  $B\ge HB_0$. Using~\eqref{2.3} applied to $b_n$  and~\eqref{2.11}, we obtain
\begin{multline}
 \|\partial^\beta f(x-\cdot)\|^A_B=\sup_{y,\gamma}
 \frac{|\partial^{\beta+\gamma}f(x-y)|}{w_a(|y|/A)B^{|\gamma|}b_{|\gamma|}}\\
 \le \|f\|_{A_0,B_0} \sup_{y,\gamma}
 \frac{B_0^{|\beta+\gamma|}b_{|\beta+\gamma|}}{w_a(|y|/A)w_a(|x-y|/A_0)B^{|\gamma|}b_{|\gamma|}}
  \le K \|f\|_{A_0,B_0} \frac{(H B_0)^{|\beta|}b_{|\beta|}}
 {w_a(|x|/(A+A_0))}.
 \label{3.3}
 \end{multline}
It follows from~\eqref{3.2} and~\eqref{3.3} that $u\ast f\in S^{b,HB_0}_{a,A+A_0}$ and
\begin{equation}
 \|u\ast f\|_{A+A_0,HB_0}\le K\|u\|^A_B\, \|f\|_{A_0,B_0}.
 \label{3.4}
 \end{equation}
Hence, $\deab\subset C\bigl(\Sab\bigr)$. The proof holds true for $a_n\equiv 1$. In this case, the supports of the functions $1/w_1(|y|/A)$ and  $1/w_1(|x-y|/A_0)$ are disjoint  for $|x|> A+A_0$, and   $(u\ast f)(x)=0$  for each $x$ in this region.
If  $u\in\deabp$, then $u$ is continuous on every space $\mathcal E_{a,A}^{(b)}=\varprojlim_{B\to0}\mathcal E^{b,B}_{a,A}$, $A>0$, and is bounded in some norm $\|\cdot\|^A_B$, where $B$ depends on $A$. For $f\in\sab$, the norm $\|f\|_{A_0,B_0}$ is finite for arbitrarily small $A_0$ and $B_0$. Choosing $B_0\le B/H$, we again arrive at~\eqref{3.4}, which implies the inclusion $\deabp\subset C\bigl(\sab\bigr)$.
\end{proof}

\begin{proposition}
\label{P3.3}  The inclusion maps
 \begin{equation}
\deab\hookrightarrow C\bigl(\Sab\bigr),\qquad  \deabp\hookrightarrow C\bigl(\sab\bigr)
\label{3.5}
 \end{equation}
 are continuous.
\end{proposition}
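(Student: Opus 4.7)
The plan is to exploit the bornological character of $\deab$ together with the key convolution estimate~\eqref{3.4} already obtained in the proof of Proposition~\ref{P3.2}. Since $\deab$ is the strong dual of the (DFS)-space $\eab$ (see the discussion around~\eqref{2.18}), it is itself an (FS)-space, hence metrizable and bornological. A linear map from a bornological space to a locally convex space is continuous exactly when it sends bounded sets to bounded sets, so it suffices to show that for every bounded subset $\mathcal U \subset \deab$ the family $\{u \ast \cdot\colon u \in \mathcal U\}$ is bounded in $\mathcal L\bigl(\Sab\bigr)$.

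First I would fix such a $\mathcal U$. Since $\eab$ is barrelled (being (DFS)), strong boundedness of $\mathcal U$ is equivalent to equicontinuity on $\eab$, and hence to equicontinuity on every step $\mathcal E^{b, B+}_{(a)}$ of the inductive representation~\eqref{2.18}. Because each $\mathcal E^{b, B+}_{(a)}$ is an (FS)-space realized as a projective limit of the Banach spaces $\mathcal E^{b, B+\epsilon}_{a, A}$, equicontinuity on it provides, for every $B>0$, indices $A,\epsilon>0$ and a constant $C>0$ (all depending on $\mathcal U$ and $B$) such that
\begin{equation}
 \sup_{u\in\mathcal U}\|u\|^A_{B+\epsilon} \le C.
 \notag
\end{equation}

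Next, I would take an arbitrary bounded set $\mathcal B\subset\Sab$. Regularity of the (DFS)-space $\Sab$ yields $A_0,B_0,M>0$ with $\|f\|_{A_0,B_0}\le M$ for all $f\in\mathcal B$. Applying the previous step with $B:=HB_0$ gives the corresponding $A,\epsilon,C$, and since $B+\epsilon\ge HB_0$ the estimate~\eqref{3.4} (with $B$ replaced by $B+\epsilon$) furnishes
\begin{equation}
 \|u\ast f\|_{A+A_0,\,HB_0} \le K\,\|u\|^A_{B+\epsilon}\,\|f\|_{A_0,B_0} \le KCM
 \notag
\end{equation}
uniformly for $u\in\mathcal U$ and $f\in\mathcal B$. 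Hence $\{u\ast f\colon u\in\mathcal U,\,f\in\mathcal B\}$ lies in a ball of the Banach space $S^{b,HB_0}_{a,A+A_0}$ and is in particular bounded in $\Sab$, which is exactly the required boundedness of $\{u\ast\cdot\}_{u\in\mathcal U}$ in $\mathcal L\bigl(\Sab\bigr)$.

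The inclusion $\deabp\hookrightarrow C\bigl(\sab\bigr)$ is handled along the same scheme: $\deabp$ is again an (FS)-space, bounded sets in the (FS)-space $\sab$ satisfy $\sup_{f\in\mathcal B}\|f\|_{A_0,B_0}<\infty$ for arbitrarily small $A_0,B_0$, and the equicontinuity argument is run on the representation $\eabp=\varinjlim_{A\to 0}\mathcal E^{(b)}_{a,A-}$ from~\eqref{2.18}, combined with the second instance of~\eqref{3.4} noted at the end of the proof of Proposition~\ref{P3.2}. The conceptual obstacle common to both cases is that $u\mapsto\|u\|^A_B$ is not a globally defined continuous seminorm on $\deab$ or $\deabp$, since the admissible indices $A,B$ depend on the individual functional $u$. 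Passing to bounded subsets $\mathcal U$ removes this difficulty, because barrelledness of $\eab$ (respectively $\eabp$) promotes boundedness in the strong dual to equicontinuity on the source and thereby supplies a single pair of indices that serves uniformly for all $u\in\mathcal U$.
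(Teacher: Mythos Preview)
Your argument hinges on the assertion that $\eab$ is a (DFS)-space, so that $\deab$ is an (FS)-space and in particular bornological. This is not what the discussion around~\eqref{2.18} says: the paper establishes only that each \emph{step} $\mathcal E^{b,B+}_{(a)}$ is an (FS)-space. The space $\eab=\varinjlim_{B\to\infty}\mathcal E^{b,B+}_{(a)}$ is therefore a countable inductive limit of Fr\'echet spaces, i.e.\ an (LF)-space, and there is no reason to expect it to be (DFS). (Compare with the Schwartz prototype $\mathcal O_C$, which is an (LF)-space but not (DFS).) Consequently there is no justification for $\deab$ being metrizable or bornological, and without bornology the implication ``bounded $\to$ bounded $\Rightarrow$ continuous'' fails. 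The same objection applies verbatim to the second inclusion.

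Everything after your bornology claim is correct: barrelledness of $\eab$ (as an inductive limit of barrelled spaces) does upgrade strong boundedness to equicontinuity, and the uniform estimate you extract from~\eqref{3.4} is exactly right. What is missing is a replacement for the bornological shortcut. The paper supplies one by working directly with neighborhoods: it first observes that the strong topology on $\deab$ is finer than the projective-limit topology $\varprojlim_{B\to\infty}\bigl(\mathcal E^{b,B+}_{(a)}\bigr)'$ (polars of bounded sets in the steps are already polars of bounded sets in $\eab$), and then, given a basic $0$-neighborhood $W_{Q,V}$ in $\mathcal L\bigl(\Sab\bigr)$, it constructs an explicit $0$-neighborhood $U$ in that projective limit (an inductive-limit neighborhood in the appropriate $\bigl(\mathcal E^{b,k+}_{(a)}\bigr)'$, with $k$ tied to the step containing $Q$) such that convolution by any $u\in U$ sends $Q$ into $V$. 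This neighborhood-by-neighborhood construction uses~\eqref{3.4} in essentially the same way you do, but avoids any unproved structural hypothesis on $\deab$.
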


\begin{proof} By the duality between projective and inductive topologies
(Sect.~IV.4.5 in~\cite{Sch}), it follows from~\eqref{2.18} that
  $\deab$ is algebraically identified with
$\varprojlim_{B\to \infty}\bigl(\mathcal E^{b,B+}_{(a)}\bigr)'$. A base of neighborhoods of the origin in $\deab$ is formed by the polars of the bounded subsets of $\eab$, and this topology is finer than the projective limit topology because every  bounded subset of $\mathcal E^{(b,B)}_{(a)}$ is bounded in $\eab$. We show that the projective limit topology is  in turn finer than the topology induced by $C\bigl(\Sab\bigr)$.
 Let $Q$ be a bounded set in $\Sab$, let $V$ be  a $0$-neighborhood  in $\Sab$, and let  $W_{Q,V}$ be the set of the operators in $\mathcal L\bigl(\Sab\bigr)$ that map $Q$ into $V$. The family of all sets $W_{Q,V}$ with various $Q$ and $V$ forms a base of $0$-neighborhoods  in $\mathcal L\bigl(\Sab\bigr)$. We assert that  for every $W_{Q,V}$, there exists a $0$-neighborhood $U$ in  $\varprojlim_{B\to \infty}\bigl(\mathcal E^{b,B+}_{(a)}\bigr)'$ such that all operators of convolution
by elements of  $U$ are contained in $W_{Q,V}$. Taking the projective  limit
$\mathcal E^{b,B+}_{(a)}=\varprojlim_{A\to\infty,\epsilon\to0} \mathcal E^{b,B+\epsilon}_{a,A}$ in the reduced form, i.e., replacing every space $\mathcal E^{b,B+\epsilon}_{a,A}$, $B>0$, with the closure  of  $\mathcal E^{b,B+}_{(a)}$ in this space and letting $E^{b,B+\epsilon}_{a,A}$ denote this closure, we have $\bigl(\mathcal E^{b,B+}_{(a)}\bigr)'=\varinjlim_{A\to\infty,\epsilon\to0}\bigl(E^{b,B+\epsilon}_{a,A}\bigr)'$. By Theorem~11 in ~\cite{K1967}, this identity holds algebraically and topologically because the inclusion maps $E^{b,B}_{a,\bar A}\hookrightarrow  E^{b,\bar B}_{a, A}$, where $A<\bar A$ and  $B<\bar B$, are compact.
To simplify the notation, we use an injective sequence of spaces
that is equivalent to the  system
  $S^{b,B}_{a,A}$. Since the inductive limit $\Sab$ is regular, the set $Q$ is contained in  $S_{a,k}^{b,k/H}$ with sufficiently large $k\in\oN$ and is bounded in its norm by a constant  $C$. By the definition of the inductive topology, $V$ contains a set of the form $\bigcup_{n\in \oN}\sum_{m\le n}V_m$, where $V_m=\{f\in S_{a,m}^{b,m}\colon \|f\|_{m,m}\le\varepsilon_m\}$ with some $\varepsilon_m>0$. Let $U$ be the intersection of $\deab$ with a $0$-neighborhood in $\bigl(\mathcal E^{b,k+}_{(a)}\bigr)'=\varinjlim_{m\to\infty}\bigl(E^{b,k+1/m}_{a,m}\bigr)'$ which has the form
     $\bigcup_{n\in \oN}\sum_{m\le n}U_m$, where $U_m=\{u\in \bigl(\mathcal E^{b,k+}_{(a)}\bigr)'\colon \|u\|^{m}_{k+1/m}\le\varepsilon_{k+m}/(KC) \}$.
    It follows from~\eqref{3.4} that $u\ast f\in S^{b,k}_{a,k+m}$ and $\|u\ast f\|_{k+m,k}\le\varepsilon_{k+m}$ for all $f\in Q$ and $u\in U_m$. Because
 $\|u\ast f\|_{k+m,k+m}\le \|u\ast f\|_{k+m,k}$, we conclude that the operator of convolution by $u\in U_m$ maps $Q$ into  $V_{k+m}$. Therefore, all the operators of convolution by elements of $U$ map  $Q$ into  $V$, as claimed.

  We  now show  that the inclusion map $\varprojlim_{A\to 0}\bigl(\mathcal E_{a,A-}^{(b)}\bigr)'\to C\bigl(\sab\bigr)$ is continuous.
  Each bounded set
  $Q\subset\sab$ is bounded in the norm of every space $S_{a,1/m}^{b,1/(Hm)}$ by a constant $C_m$.
  Any $0$-neighborhood $V$ in  $\sab$ contains the intersection of $\sab$ with a set of the form $V_{k,\varepsilon}=\{f\in S_{a,1/k}^{b,1/k}\colon \|f\|_{1/k,1/k}\le \varepsilon\}$, where $k\in\oN$ and $\varepsilon>0$.
   We take $U$ to be the intersection of $\deabp$ with the $0$-neighborhood in   $\bigl(\mathcal E^{(b)}_{a,1/k-}\bigr)'$,
  which is the  absolutely convex hull of the union of the sets  $U_m=\{u\in \bigl(\mathcal E^{(b)}_{a,1/k-}\bigr)'\colon \|u\|^{1/k-1/m}_{1/m}\le\varepsilon/(KC_m) \}$,  $m> k$. It follows from~\eqref{3.4} that
  $\|u\ast f\|_{1/k, 1/m}\le\varepsilon$ for all $f\in Q$ and $u\in U_m$.   Because $\|u\ast f\|_{1/k,1/k}\le \|u\ast f\|_{1/k,1/m}$ for  $m> k$, we conclude that the operators of convolution by elements of $U_m$ map $Q$ into  $V_{k,\varepsilon}$. The set $V_{k,\varepsilon}$ is absolutely convex, and all the operators of convolution by  elements of $U$ hence belong to $W_{Q,V}$. This completes the proof.
  \end{proof}

\section{Complete characterization of the convolutor spaces for the spaces of type $S$}
\label{S4}
Theorem~1 in~\cite{S2020} establishes that every functional  $u\in C\bigl(\Sab\bigr)$ has a unique continuous extension to $\eab$ and every  $u\in C\bigl(\sab\bigr)$ extends uniquely to a continuous linear functional on $\eabp$.  Therefore, there are  natural inclusion maps   $C\bigl(\Sab\bigr)\to\deab$ and $C\bigl(\sab\bigr)\to\deabp$. Clearly, their compositions with the respective inclusions~\eqref{3.5} are
the identity  on $C\bigl(\Sab\bigr)$  and on $C\bigl(\sab\bigr)$. Since the extensions are unique, the compositions of these  maps in reverse order are the identity on $\deab$ and on $\deabp$. Hence, the convolutor space $C\bigl(\Sab\bigr)$ consists of the same elements of $\dSab$ as  $\deab$ and    $C\bigl(\sab\bigr)$  coincides algebraically with  $\deabp$. We now show  that the extension procedure  used in~\cite{S2012-I,S2020} makes clear the relation between the topologies of these spaces.
 \begin{theorem}
 \label{T4.1}
 The  space  $C\bigl(\sab\bigr)$ of  convolutors for $\sab$ is canonically isomorphic to  $\varprojlim_{A\to 0}\bigl(\mathcal E_{a,A}^{(b)}\bigr)'$.
\end{theorem}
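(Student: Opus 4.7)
The algebraic identification $C\bigl(\sab\bigr)\cong\deabp$ has already been assembled in the paragraph preceding the theorem from \refP{3.2} and Theorem~1 of~\cite{S2020}, so the real content of the theorem is the topological upgrade. As a preliminary I would record the topological identification
\[
\deabp=\varprojlim_{A\to 0}\bigl(\mathcal E_{a,A-}^{(b)}\bigr)'=\varprojlim_{A\to 0}\bigl(\mathcal E_{a,A}^{(b)}\bigr)'
\]
obtained by dualizing the second representation in~\eqref{2.18}: the strong dual of the inductive limit of a compactly linked sequence of (FS)-spaces equals the projective limit of the strong duals, via the same Theorem~11 of~\cite{K1967} already used in \refP{3.3}, and the equivalence of the systems $\{\mathcal E_{a,A-}^{(b)}\}$ and $\{\mathcal E_{a,A}^{(b)}\}$ noted in \refS{2} lets me replace $\mathcal E_{a,A-}^{(b)}$ by $\mathcal E_{a,A}^{(b)}$.

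Continuity of $\deabp\hookrightarrow C\bigl(\sab\bigr)$ is \refP{3.3}, so it remains to prove continuity in the opposite direction: for each bounded $\mathcal B\subset\mathcal E_{a,A}^{(b)}$ the seminorm $p_{\mathcal B}(u)=\sup_{h\in\mathcal B}|\langle u,h\rangle|$ must be dominated by an operator seminorm on $\mathcal L\bigl(\sab\bigr)$, i.e., by $\sup_{f\in Q}\|u\ast f\|_{A',B'}$ for some $A',B'$ and some bounded $Q\subset\sab$. To produce such an estimate I would make explicit the extension procedure of~\cite{S2012-I,S2020}. Fix $\rho\in\sab$ with $\int\rho\,dx=1$ and set $\rho_\epsilon(y)=\epsilon^{-d}\rho(y/\epsilon)$. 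Since $u$ is a convolutor, $u\ast\rho_\epsilon\in\sab$, and one checks
\[
\langle u,h\rangle=\lim_{\epsilon\to 0}\int_{\oR^d}(u\ast\rho_\epsilon)(x)\,h(-x)\,dx,\qquad h\in\eabp,
\]
by verifying $\check\rho_\epsilon\ast h\to h$ in $\eabp$ and using the pairing identity $\int(u\ast\rho_\epsilon)(x)h(-x)\,dx=\langle u,\check\rho_\epsilon\ast h\rangle$. Given the uniform pointwise bound $|h(-x)|\le M\,w_a(|x|/A)$ available on $\mathcal B$, the integral is controlled by the $L^1$-variant of the norm~\eqref{2.9} (equivalent to~\eqref{2.9} by the remark after~\eqref{2.10}); the weight $w_a(|x|/A)$ is absorbed into a norm of $u\ast\rho_\epsilon$ at a smaller radius $A'<A$ using the submultiplicativity~\eqref{2.11}.

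The chief obstacle is uniformity in $\epsilon$: the family $\{\rho_\epsilon\}_{\epsilon>0}$ is not bounded in $\sab$, because rescaling does not respect the Gelfand--Shilov norms, so it cannot directly play the role of $Q$. I expect the resolution to consist in fixing $\rho$ once and for all and bounding $\langle u,h\rangle$ in terms of a single $\|u\ast\rho\|_{A',B'}$, at the cost of a constant depending only on $\mathcal B$ and on the equicontinuity of the map $\epsilon\mapsto\check\rho_\epsilon\ast h$ on $\mathcal B$ near $\epsilon=0$; the required equicontinuity amounts to uniform Gelfand--Shilov bounds on the mollifications $\check\rho_\epsilon\ast h$, which can be obtained by arguments of the same type as in the proof of \refP{3.2}. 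Alternatively, one may appeal to the approximate-identity construction prepared in \refS{5}, which is designed precisely to supply bounded families of mollifiers in the Gelfand--Shilov topology and can be plugged in directly. Once this uniformity is secured, the seminorm inequality $p_{\mathcal B}(u)\le C\sup_{f\in Q}\|u\ast f\|_{A',B'}$ follows and establishes the topological isomorphism claimed by the theorem.
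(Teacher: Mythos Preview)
There is a genuine gap. You correctly isolate the obstacle --- the mollifier family $\{\rho_\epsilon\}$ is unbounded in $\sab$ --- but neither of your proposed resolutions closes it. Option~(a) cannot work as stated: a single seminorm $\|u\ast\rho\|_{A',B'}$ does not control $\langle u,h\rangle$, since $u\ast\rho$ does not determine $u$ (take $\widehat u$ supported on the zero set of $\widehat\rho$), and no amount of equicontinuity of $\epsilon\mapsto\check\rho_\epsilon\ast h$ on $\mathcal B$ compensates for the fact that you must still pair $u$ with all the $\check\rho_\epsilon\ast h$, which requires an operator bound over an unbounded family. Option~(b) fares no better: the approximate identity of \refS{5} is $e_n(\zeta)=n^{2d}e_1(n\zeta)$, i.e., exactly the same sort of rescaled delta sequence, and it is likewise unbounded in the Gelfand--Shilov topology (Lemma~\ref{L5.1} proves that $e_n\ast_\hbar g$ is bounded for fixed $g$, not that $\{e_n\}$ is bounded).

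The missing idea, which is the heart of the paper's proof, is to abandon mollification of $u$ altogether and instead use a \emph{resolution of the identity} in the $h$-variable. Fix $f_0\in\sab$ with $\int f_0=1$ and write
\[
\langle u,h\rangle=\int_{\oR^d}(u\ast h_\xi)(\xi)\,d\xi,\qquad h_\xi(x)\coloneq h(\xi-x)f_0(x),
\]
which is formula~\eqref{4.1}. The crucial point is that the family of test functions now depends on $h$ and $\xi$ but is \emph{localized by the fixed, rapidly decreasing $f_0$}; estimate~\eqref{4.3} shows that $Q=\{w_a(2|\xi|/A)^{-1}h_\xi:\xi\in\oR^d,\ h\in G\}$ is bounded in $\sab$ for any bounded $G\subset\mathcal E_{a,A}^{(b)}$. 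Then $|\langle u,h\rangle|\le\int|(u\ast h_\xi)(\xi)|\,d\xi$ is controlled by the operator seminorm of $f\mapsto u\ast f$ over $Q$: the growth factor $w_a(2|\xi|/A)$ reappearing from $Q$ is beaten by the decay $w_a(2H|\xi|/A)^{-1}$ obtained by evaluating $u\ast h_\xi$ at $\xi$ with a norm of radius $A/(2H)$, via~\eqref{2.13}. This is exactly the bounded set $Q$ and neighborhood $V$ that your last paragraph is looking for.
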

\begin{proof} If $u$ belongs to $C\bigl(\sab\bigr)$, then it can be extended  to a continuous functional on  $\eabp$ in the following way.  Let
$f_0$  be a function in $\sab$ such that $\int
f_0(\xi)d\xi=1$  and let $h\in \eabp$. We set
\begin{equation}
\langle\tilde  u,h\rangle\coloneq\int \langle u,h(\cdot) f_0(\xi-\cdot)\rangle
d\xi\equiv\int(u*h_\xi)(\xi) d\xi,\quad \text{where $h_\xi(x)\coloneq
h(\xi-x)f_0(x)$}.
 \label{4.1}
\end{equation}
 The integrand in~\eqref{4.1} is  well-defined and continuous in $\xi$ because translations act continuously on
$\sab$ and  $h$ is a pointwise multiplier of this space by Theorem~2 in~\cite{S2019-2}. We examine the behaviour of $(u*h_\xi)(\xi)$ at infinity.
The norm $\|f_0\|_{A_0,B_0}$ is finite for any  $A_0,B_0>0$ and there is an $A$ such that $\|h\|^A_B<\infty$ for any $B>0$. Using~\eqref{2.4} and \eqref{2.12}, we obtain
\begin{multline}
|\partial^\beta_x h_\xi(x)|\le \sum_{\gamma\le\beta}\binom{\beta}\gamma
|\partial^\gamma h(\xi-x)\partial^{\beta-\gamma} f_0(x)|\\\le
\|h\|^A_B\,\|f_0\|_{A_0,B_0}\sum_{\gamma\le\beta}\binom{\beta}\gamma
B^{|\gamma|}B_0^{|\beta-\gamma|}b_{|\gamma|}
b_{|\beta-\gamma|}\frac{w_a(|\xi-x|/A)}{w_a(|x|/A_0)}
\\\le K\|h\|^A_B\,\|f_0\|_{A_0,B_0}
(B+B_0)^{|\beta|}b_{|\beta|}\frac {w_a(2|\xi|/A)w_a(2|x|/A)}{w_a(|x|/A_0)}.
 \label{4.2}
\end{multline}
 Choosing
 $A_0\le A/(2H)$ and using~\eqref{2.13}, the $x$-dependent factor in~\eqref{4.2} is estimated as follows
\begin{equation}
\frac{w_a(2|x|/A)}{w_a(|x|/A_0)}\le \frac{w_a(|x|/HA_0)}{w_a(|x|/A_0)}\le \frac{K}{w_a(|x|/HA_0)}.
 \notag
\end{equation}
Consequently, $h_\xi\in S^{b,B+B_0}_{a,HA_0}$ and
\begin{equation}
\|h_\xi\|_{HA_0,B+B_0}\le K\|h\|^A_B\,\|f_0\|_{A_0,B_0} w_a(2|\xi|/A).
 \label{4.3}
\end{equation}
Let $U=\{f\in \sab\colon\|f\|_{A/2H,\,1}<1\}$. Since the map $f\to u\ast f$  is continuous on $\sab$, there is a neighborhood $U_1=\{f\in \sab\colon\|f\|_{A_1,B_1}\le\varepsilon\}$ such that $u*f\in
U$ for all $f\in U_1$. Choose $A_0$, $B_0$ and $B$  such that $HA_0\le A_1$ and $B+B_0\le B_1$. Then $\|f\|_{A_1,B_1}\le \|f\|_{HA_0,B+B_0}$ and~\eqref{4.3} implies that
\begin{equation}
\|u\ast h_\xi\|_{A/2H,\,1}\le \varepsilon^{-1}K\|h\|^A_B\,\|f_0\|_{A_0,B_0} w_a(2|\xi|/A).
 \notag
\end{equation}
Using~\eqref{2.13} again,   we obtain
 \begin{multline}
|(u\ast h_\xi)(\xi)|\le \varepsilon^{-1}K\|h\|^A_B\,\|f_0\|_{A_0,B_0}\frac{w_a(2|\xi|/A)}{w_a(2H|\xi|/A)}\\
\le  \varepsilon^{-1}K^2\|h\|^A_B\,\|f_0\|_{A_0,B_0}\frac{1}{w_a(2|\xi|/A)}.
 \label{4.4}
\end{multline}
Therefore, the integral in~\eqref{4.1} is absolutely convergent and  defines a linear functional $\tilde u$ on $\eabp$.
Since the right-hand side of~\eqref{4.4} contains $\|h\|_B^A$, this functional is continuous on every space $\mathcal E^{(b)}_{a,A}$, $A>0$,
and therefore on $\eabp$.

 Now we show that $\tilde u|_{\sab}=u$.
If $h\in \sab$, then for any positive
 $A_0$,   $B_0$, $A_1$, and $B_1$, we have the inequality
\begin{equation}
|\partial_x^\beta(h(x)f_0(\xi-x))|\le
\|h\|_{A_1,B_1}\|f_0\|_{A_0,B_0}(B_1+B_0)^{|\beta|}b_{|\beta|}\frac{1}{w_a(|x|/A_1)w_a(|\xi-x|/A_0)}.
 \label{4.5}
\end{equation}
  If $HA_1\le A_0$, then  $1/w_a(|x|/A_1)\le K/w_a^2(|x|/A_0)$ and~\eqref{4.5} coupled with~\eqref{2.12} gives
 \begin{equation}
\|(h(\cdot)f_0(\xi-\cdot)\|_{A_0,B_1+B_0}\le K
\|h\|_{A_1,B_1}\|f_0\|_{A_0,B_0}\frac{1}{w_a(|\xi|/2A_0)}.
 \notag
\end{equation}
 We see that in this case, the integral in~\eqref{4.1} remains  absolutely convergent when $u$ is replaced with  any functional in  $\dsab$ and hence the sequence of Riemann sums
    \begin{equation}
s_n(x)=\sum_{\alpha\in\oZ^d, |\alpha|\le n^2}h(x) f_0(\alpha /n-x)/n^d
 \label{4.6}
\end{equation}
  for $\int_{\oR^d} h(x)f_0(\xi-x)d\xi$ is weakly Cauchy in $\sab(\oR^d)$. Since
 $\sab$ is a Montel space, it is   weakly sequentially complete and  weak sequential convergence implies convergence in  its topology.
So, the sequence $s_n(x)$ converges in
   $\sab$. Its limit cannot be anything other than  $h(x)$, because the topology of this space is finer than the topology of simple convergence. Hence,  $\langle \tilde u,h\rangle=\lim_{n\to \infty}\langle u,s_n\rangle=\langle u, h\rangle$ if $h\in\sab$.
Similar arguments show that  $\sab$ is dense in $\eabp$.
Namely, if  $h\in\mathcal E^{(b)}_{a,A}$, then for any positive $A_0$, $B_0$, and $B$ we have
\begin{equation}
|\partial_x^\beta(h(x)f_0(\xi-x))|\le
\|h\|^A_B\|f_0\|_{A_0,B_0}(B+B_0)^{|\beta|}b_{|\beta|}\frac{w_a(|x|/A)}{w_a(|\xi-x|/A_0)},
\label{4.7}
\end{equation}
where $w_a(|x|/A)\le K w_a(H|x|/A)/w_a(|x|/A)$. Choosing $A_0\le A$ and using~\eqref{2.12}, we obtain
 \begin{equation}
\|(h(\cdot)f_0(\xi-\cdot)\|^{A/H}_{B+B_0}\le K
\|h\|^A_B\|f_0\|_{A_0,B_0}\frac{1}{w_a(|\xi|/2A)}.
 \label{4.7*}
\end{equation}
  Hence in this case,  the integral in~\eqref{4.1} is absolutely convergent for any $u$ in the dual of the Montel space
$\mathcal E^{(b)}_{a,A/H-}$ and the sequence~\eqref{4.6}  converges to $h$ in this space and, a fortiori, in $\eabp$.
This proves Proposition~\ref{P3.1}, or more precisely, its part concerning the inclusion  $\sab\hookrightarrow \eabp$ and shows that the continuous extension of $u$ to $\eabp$ is unique.
  In combination with Proposition~\ref{P3.2}, this also completes the proof of the
  algebraic isomorphism between   $C\bigl(\sab\bigr)$ and $\deabp$.

  It remains to show that the map $u\to\tilde u$ from $C\bigl(\sab\bigl)$ to $\varprojlim_{A\to 0}\bigl(\mathcal E_{a,A}^{(b)}\bigr)'$ is continuous, since the continuity of its inverse was proved in the proof of  Proposition~\ref{P3.3}. Let  $\mathscr B_A$ be the set of all bounded subsets of   $\mathcal E^{(b)}_{a,A}$ and $\mathscr B=\bigcup_{A>0}\mathscr B_A$. The polars of sets in $\mathscr B$ form a $0$-neighborhood base for $\varprojlim_{A\to 0}\bigl(\mathcal E_{a,A}^{(b)}\bigr)'$. Therefore it suffices to show that for any given  $G\in\mathscr B_A$, there are a bounded set   $Q$ and a $0$-neighborhood  $V$ in $\sab$ such that $u\ast Q\subset V$ implies  $\sup_{h\in G}|\langle\tilde u,h\rangle|\le 1$. It follows from~\eqref{4.3} that the set $Q=\{w_a(2|\xi|/A)^{-1} h_\xi\colon h\in G, \xi\in\oR^d\}$ is bounded in  $\sab(\oR^d)$. Let
  $V=\{f\in \sab\colon \|f\|_{A/2H,1}< \varepsilon\}$ and let  $\varepsilon^{-1}=K\int_{\oR^d} w_a(2|\xi|/A)^{-1}d\xi$.  If $u\ast Q\subset V$, then
   \begin{equation}
|(u\ast h_\xi)(\xi)|\le \varepsilon\frac{w_a(2|\xi|/A)}{w_a(2H|\xi|/A)}\le  \frac{\varepsilon K}{w_a(2|\xi|/A)}.
 \notag
\end{equation}
Hence, $|\langle\tilde u,h\rangle|\le\int_{\oR^d}|(u\ast h_\xi)(\xi)|d\xi\le 1$,
 which completes the proof.
 \end{proof}

 We turn to the case of spaces $\Sab$. To prove a theorem similar to Theorem~\ref{T4.1}, we need the following simple lemma.

 \begin{lemma} \label{L4.2}
  The weight function $w_a(t)$ satisfies, for any $n\in\oN$, the inequality
\begin{equation}
t^n w_a(t)\le C_n\, w_a(Ht),\quad t\ge 0.
 \label{4.8}
\end{equation}
\end{lemma}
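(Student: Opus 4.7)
The plan is to reduce \eqref{4.8} either directly to the stability assumption \eqref{2.3} or, more concisely, to its already-derived consequence \eqref{2.13}. The second route is nearly a one-liner and I would present it as the main argument; the first is a backup in case one wants to exhibit the constant $C_n$ intrinsically in terms of $a_n$.

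The starting observation is that the definition \eqref{2.10} immediately gives the pointwise bound $t^n \le a_n w_a(t)$ for every $n\in\oN$ and every $t\ge 0$, just by keeping the single term $t^n/a_n$ in the supremum. Multiplying both sides by $w_a(t)$ and then inserting \eqref{2.13} yields
\begin{equation}
t^n w_a(t) \le a_n\, w_a(t)^2 \le K a_n\, w_a(Ht),
\notag
\end{equation}
so the lemma holds with $C_n = K a_n$.

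As a sanity check, the same constant is reached by working from \eqref{2.3} directly. Using \eqref{2.10},
\begin{equation}
t^n w_a(t) = \sup_{k\in\oZ_+}\frac{t^{n+k}}{a_k} = \sup_{m\ge n}\frac{t^m}{a_{m-n}},
\notag
\end{equation}
and \eqref{2.3} applied with $k$ replaced by $m-n$ gives $a_{m-n} \ge a_m/(K H^m a_n)$, whence
\begin{equation}
t^n w_a(t) \le K a_n \sup_{m\ge n}\frac{(Ht)^m}{a_m} \le K a_n\, w_a(Ht).
\notag
\end{equation}
Both routes agree, which is reassuring. There is no real obstacle here: the only decision is whether to present $C_n$ as $Ka_n$ (preferable, since it shows how the growth of $a_n$ controls the loss) or to absorb it into an unspecified constant. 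I would state the result with the explicit value $C_n = K a_n$ so that it can be tracked through the subsequent estimates in Section~\ref{S4}.
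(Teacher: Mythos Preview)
Your proposal is correct, and your second route is exactly the paper's proof (yielding $C_n = K a_n$ via \eqref{2.3}); your first route is a minor shortcut through \eqref{2.13} that gives the same constant. Either version is fine.
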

\begin{proof} It follows from the definition~\eqref{2.10}  and from~\eqref{2.3} that
\begin{equation}
t^nw_a(t)=\sup_{k\in\oZ_+}\frac{t^{k+n}}{a_k}\le Ka_n\sup_{k\in\oZ_+}\frac{(Ht)^{k+n}}{a_{k+n}}\le Ka_n\, w_a(Ht).
 \notag
\end{equation}
 Hence,  \eqref{4.8} is satisfied with  $C_n=Ka_n$.
\end{proof}
\begin{theorem}\label{T4.3}
   The  space $C\bigl(\Sab\bigr)$ of convolutors for $\Sab$ is canonically isomorphic to  $\varprojlim_{B\to \infty}\bigl(\mathcal E^{b,B}_{(a)}\bigr)'$.
\end{theorem}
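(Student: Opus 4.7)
The proof should follow that of Theorem~\ref{T4.1} closely, using the same extension formula~\eqref{4.1}, with two modifications reflecting the structural differences between $(\sab,\eabp)$ and $(\Sab,\eab)$: the inductive and projective roles of the defining Banach indices are interchanged, and the non-quasianalyticity condition~\eqref{2.8} is not assumed here, so Lemma~\ref{L4.2} must be invoked to secure absolute convergence of the integral.

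Given $u\in C\bigl(\Sab\bigr)$ and $h\in\eab$, I fix $B$ such that $\|h\|^A_B<\infty$ for every $A>0$ and pick $f_0\in\Sab$ with $\|f_0\|_{A_0,B_0}<\infty$ for some specific $A_0,B_0$. Taking $A\ge 2HA_0$, the Leibniz argument behind~\eqref{4.2}--\eqref{4.3} carries over verbatim and gives $\|h_\xi\|_{HA_0,B+B_0}\le K\|h\|^A_B\,\|f_0\|_{A_0,B_0}\,w_a(2|\xi|/A)$. Since $u*\cdot$ is continuous on the (DFS)-space $\Sab$, its restriction to $S^{b,B+B_0}_{a,HA_0}$ factors continuously through some Banach step $S^{b,B_1}_{a,A_1}$ by the regularity of the inductive limit and the closed graph theorem, whence $|(u*h_\xi)(\xi)|\le C\|h\|^A_B\,w_a(2|\xi|/A)/w_a(|\xi|/A_1)$. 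Here is the key step unique to this theorem: further enlarging $A$ so that $A\ge 2HA_1$ and applying Lemma~\ref{L4.2} with any $n>d$ upgrades the bound to $C_n(1+|\xi|)^{-n}$, so~\eqref{4.1} converges absolutely and $\tilde u$ is continuous on each $\mathcal E^{b,B}_{(a)}$, hence on $\eab$. The identity $\tilde u|_{\Sab}=u$ and the density of $\Sab$ in $\eab$ then follow, as in Theorem~\ref{T4.1}, from Riemann-sum approximation using analogs of~\eqref{4.5} and~\eqref{4.7} together with the Montel property of $\Sab$; combined with Propositions~\ref{P3.1} and~\ref{P3.2}, this yields the algebraic isomorphism $C\bigl(\Sab\bigr)\cong\deab$.

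For the topological equivalence the continuity of the inverse was already shown in Proposition~\ref{P3.3}, so only the forward direction remains. Given a bounded set $G\subset\mathcal E^{b,B}_{(a)}$, the set $Q=\{w_a(2|\xi|/A)^{-1}h_\xi:h\in G,\,\xi\in\oR^d\}$ is bounded in $\Sab$ by~\eqref{4.3}, and a suitable absolutely convex $0$-neighborhood $V\subset\Sab$, assembled from balls in several Banach steps and calibrated via Lemma~\ref{L4.2}, forces $u*Q\subset V$ to imply $\sup_{h\in G}|\langle\tilde u,h\rangle|\le 1$. I expect this last continuity step to be the main technical obstacle: in the (DFS) setting, $0$-neighborhoods of $\Sab$ must be built as absolutely convex hulls of balls in many different Banach steps (as in the first half of the proof of Proposition~\ref{P3.3}) rather than from a single seminorm, and Lemma~\ref{L4.2} is precisely what allows this multi-step structure to be matched against the growth weight $w_a(2|\xi|/A)$ in the absence of~\eqref{2.8}.
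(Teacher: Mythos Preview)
Your overall strategy matches the paper's proof closely, and the first half---extending $u$ via~\eqref{4.1}, bounding $h_\xi$ through~\eqref{4.3}, using regularity of the inductive limit to land in a fixed $S^{b,B_1}_{a,A_1}$, and checking $\tilde u|_{\Sab}=u$ by Riemann sums---is essentially correct. Your use of Lemma~\ref{L4.2} for the integrability of $(u\ast h_\xi)(\xi)$ is a legitimate variant; the paper instead appeals to~\eqref{2.13} to obtain the sharper bound $1/w_a(|\xi|/HA_1)$, but either route gives absolute convergence.

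The continuity step, however, has a genuine gap. You take $Q=\{w_a(2|\xi|/A)^{-1}h_\xi\}$ with a \emph{fixed} $A$, and then seek a $0$-neighborhood $V\subset\Sab$ such that $u\ast Q\subset V$ controls $\langle\tilde u,h\rangle$. For this you would need $V$ to force a pointwise bound of the type $|g(\xi)|\,w_a(2|\xi|/A)(1+|\xi|)^{d+1}<\epsilon$. But the seminorm $g\mapsto\sup_\xi|g(\xi)|\,w_a(2|\xi|/A)(1+|\xi|)^{d+1}$ is continuous on a Banach step $S^{b,B'}_{a,A'}$ only when $A'\le A/(2H)$; on larger steps the weight $w_a(2|\xi|/A)/w_a(|\xi|/A')$ is unbounded. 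Hence no such $V$ is a $0$-neighborhood in the full inductive limit, and assembling it as an absolutely convex hull of balls in several steps (as you propose, by analogy with Proposition~\ref{P3.3}) does not help: that construction was used there for a neighborhood in the \emph{dual} space, not in $\Sab$ itself. The paper's key device is to let $A$ vary with $\xi$ by setting
\[
\mathrm w(\xi)=\inf_{A\ge 2HA_0}\sup_{h\in G}\|h\|^A_B\,w_a(2|\xi|/A),
\]
taking $Q=\{h_\xi/\mathrm w(\xi)\}$, and then using Lemma~\ref{L4.2} to show that the single seminorm $\|f\|_{\mathrm w}=\sup_x|f(x)|\,\mathrm w(x)(1+|x|^{d+1})$ is continuous on \emph{every} step $S^{b,B_1}_{a,A_1}$ (choose $A\ge 2HA_1$ in the infimum). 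This is exactly what makes $\{f:\|f\|_{\mathrm w}<\epsilon\}$ a bona fide $0$-neighborhood in $\Sab$; the infimum is the missing idea.

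Two smaller remarks. First, your framing of Lemma~\ref{L4.2} as compensating for the possible failure of~\eqref{2.8} is slightly off: the lemma is needed for the $0$-neighborhood argument even when~\eqref{2.8} holds. Second, the degenerate case $a_n\equiv 1$ (where $w_1$ is infinite off $[0,1]$ and~\eqref{2.8} fails outright) requires a separate, simpler argument with compactly supported functions; the paper treats it at the end of the proof, and your outline does not account for it.
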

 \begin{proof}
 Every  $u\in C\bigl(\Sab\bigr)$ can be  extended to  a functional  $\tilde u\in
\deab$ in the same manner as given by~\eqref{4.1}, using
$f_0\in \Sab$ with $\int f_0(\xi)d\xi=1$. In this case there are  $A_0,B_0>0$ such that $\|f_0\|_{A_0,B_0}<\infty$ and there is a $B>0$ such that
$\|h\|^A_B<\infty$ for every $A>0$. If  $A\ge 2HA_0$, then the  inequality~\eqref{4.3} holds. The unit ball of $S^{b,B+B_0}_{a,HA_0}$ is bounded in
 $\Sab$ and its image under the continuous map  $f\to u\ast f$ is also bounded in $\Sab$.  Since the inductive limit $\Sab$ is regular, this image is contained and bounded in some
 $S^{b,B_1}_{a,A_1}$, where $A_1$ and  $B_1$  are independent of $A$. Hence there exists a constant  $C>0$ such that
\begin{equation}
\|u\ast h_\xi\|_{A_1,B_1}\le C\|h\|^A_B\,
 w_a(2|\xi|/A).
 \label{4.9}
\end{equation}
In particular, for $A\ge 2HA_1$, we have
 \begin{equation}
|(u\ast h_\xi)(\xi)|\le C\|h\|^A_B\,
\frac{w_a(2|\xi|/A)}{w_a(|\xi|/A_1)}\le  CK\|h\|^A_B\,\frac{1}{w_a(|\xi|/HA_1)}.
 \label{4.10}
\end{equation}
Therefore, the integral in~\eqref{4.1} is absolutely convergent and the functional
 $\tilde u$ is well defined on $\eab$.
Since the right-hand side of~\eqref{4.10} contains $\|h\|_B^A$, this functional is continuous. 

 We now show that
$\tilde u$ coincides with $u$ on $\Sab$. If $h\in S^{b,B_1}_{a,A_1}$, then the inequality~\eqref{4.5} holds and, choosing  $A'\ge \max(HA_1,A_0)$ and using~\eqref{2.12} and~\eqref{2.13}, we obtain
\begin{multline}
\|h(\cdot)f_0(\xi-\cdot)\|_{A',B_1+B_0}\le
\|h\|_{A_1,B_1}\|f_0\|_{A_0,B_0}\sup_x\frac{w_a(|x|/A')}{w_a(|x|/A_1)w_a(|\xi-x|/A_0)}\\
\le K\|h\|_{A_1,B_1}\|f_0\|_{A_0,B_0}\frac{1}{w_a(|x|/A')w_a(|\xi-x|/A_0)}
\le K\|h\|_{A_1,B_1}\|f_0\|_{A_0,B_0} \frac{1}{w_a(|\xi|/2A')}.
 \notag
\end{multline}
Hence, in this case, the integral in~\eqref{4.1}  is absolutely convergent for any  $u\in \dSab$ and the sequence~\eqref{4.6} of Riemann sums converges to $h$ in $\Sab$ by the same argument as for $\sab$ because $\Sab$ is also a Montel space.
In a similar manner we find that  $\Sab$ is dense in  $\eab$. Indeed, if $h\in \mathcal E^{b,B}_{(a)}$, then~\eqref{4.7} holds for any $A>0$ and~\eqref{4.7*} holds for $A\ge A_0$.
Therefore, in this case,  the integral in~\eqref{4.1} is absolutely convergent for any
   $u$ in the dual of the Montel space  $\mathcal E^{b,(B+B_0)+}_{(a)}$, and the sequence~\eqref{4.6} converges to
 $h$ in this space and, a fortiori, in  $\eab$.

 It remains to show that the map $u\to\tilde u$ from $C\bigl(\Sab\bigr)$ to $\varprojlim_{B\to \infty}\bigl(\mathcal E^{b,B}_{(a)}\bigr)'$ is continuous, i.e., that for any $B>0$ and every bounded set $G$ in $\mathcal  E^{b,B}_{(a)}$,  there are a bonded set $Q$  and a $0$-neighborhood  $V$ in $\Sab$ such that  $u\ast Q\subset V$ implies  $\sup_{h\in G}|\langle\tilde u,h\rangle|\le 1$.  If  $A\ge 2HA_0$,  then it follows from~\eqref{4.3} that
 \begin{equation}
\sup_{h\in G}\|h_\xi\|_{HA_0,B+B_0}\le K \|f_0\|_{A_0,B_0}\sup_{h\in G}\|h\|^A_B\, w_a(2|\xi|/A).
 \label{4.11}
\end{equation}
Let  $\mathrm w(\xi)$ be defined by
 \begin{equation}
\mathrm w(\xi)=\inf_{A\ge 2HA_0}\sup_{h\in G}\|h\|^A_B\, w_a(2|\xi|/A).
 \notag
\end{equation}
This function is measurable, locally integrable and bounded from below by a positive constant. It follows from~\eqref{4.11} that the set of functions $h_\xi(x)/\mathrm w(\xi)$, where $h$ runs over $G$ and $\xi$ runs over $\oR^d$,  is bounded in $\Sba(\oR^d)$. We take this set as $Q$ and note that
\begin{equation}
\left\{f\in \Sab(\oR^d)\colon \sup_{x\in \oR^d}\left|f(x)\mathrm w(x)(1+|x|^{d+1})\right|<\epsilon\right\} \qquad (\epsilon>0)
 \label{4.12}
\end{equation}
is a $0$-neighborhood in  $\Sab(\oR^d)$ because by Lemma~\ref{L4.2} we have
 \begin{multline}
\sup_x\left|f(x)\mathrm w(x)(1+|x|^{d+1})\right|\le\sup_{h\in G}\|h\|^A_B\sup_x\left|f(x)(1+|x|^{d+1})w_a(2|x|/A)\right|\le \\ \le \sup_{h\in G}\|h\|^A_B\sup_x\left|f(x)(1+C_{d+1}A^{d+1})w_a(2H|x|/A)\right|\le C'_{d,A}\sup_{h\in G}\|h\|^A_B\,\|f\|_{A/2H,B_1}
 \notag
\end{multline}
 for any $A\ge2HA_0$ and  $B_1>0$; hence the norm $\|f\|_{\mathrm w}=\sup_x|f(x)\mathrm w(x)(1+|x|^{d+1})|$ is weaker than the norm of any space $S_{a,A_1}^{b,,B_1}$, $A_1, B_1>0$. Taking for
 $V$ the neighborhood~\eqref{4.12} with $\epsilon^{-1}=\int_{\oR^d} (1+|\xi|^{d+1})^{-1}d\xi$, we conclude that for each $h\in G$, the inclusion $u\ast Q\subset
 V$ implies
\begin{equation}
|\langle\tilde u,h\rangle|\le \int_{\oR^d}|(u\ast h_\xi)(\xi)|d\xi=\int_{\oR^d}\left|\left(u\ast \frac{h_\xi}{\mathrm w(\xi)}\right)\!(\xi)\,\mathrm w(\xi) \right| d\xi<\epsilon  \int_{\oR^d} \frac{d\xi}{1+|\xi|^{d+1}}= 1.
 \notag
\end{equation}

The proof is somewhat simpler in the particular case $a_n\equiv 1$. Returning to the second line of~\eqref{4.2}, we see that  if $A\ge2A_0$  and $|\xi|\le A/2$, then  $w_1(|\xi-x|/A)=1$  on the support of $1/w_1(|x|/A_0)$. Hence, the inequality~\eqref{4.3} is replaced by $\|h_\xi\|_{A_0, B+B_0}\le \|h\|^A_B \|f_0\|_{A_0,B_0}$, where  $|\xi|\le A/2$. Instead of~\eqref{4.9}, we obtain   $\|u\ast h_\xi\|_{A_1,B_1}\le C\|h\|^A_B$ for all $\xi$ in this region. The function $u\ast h_\xi$ has  compact support in this case and~\eqref{4.10}  holds with $K=1$, $H=1$, the integration in~\eqref{4.1} is over a bounded domain and  $\tilde u$ is  well defined and continuous on $\breve{\mathcal E}^{\{b\}}=\breve{\mathcal E}^{\{b\}}_{(1)}$.  An easy adaptation of the above arguments shows that  $\tilde u=u$ on $S^{\{b\}}_{\{1\}}=\mathcal D^{\{b\}}$ and $\mathcal D^{\{b\}}$ is dense in  $\breve{\mathcal E}^{\{b\}}$. Finally, let $G$ be a bounded set in $\breve{\mathcal E}^{b,B}$. Since $\|h\|^A_B$ increases monotonically with $A$, we have for any $\xi$
 \begin{equation}
\sup_{h\in G}\|h_\xi\|_{A_0,B+B_0}\le  \|f_0\|_{A_0,B_0}\sup_{h\in G}\|h\|^{2A_0+ 2|\xi|}_B.
 \notag
\end{equation}
 Setting $\mathrm  w(\xi)=\sup_{h\in G}\|h\|^{2A_0+ 2|\xi|}_B$, we see that $Q=\{h_\xi(x)/\mathrm w(\xi): h\in G, \xi\in \oR^d\}$ is bounded in $\mathcal D^{\{b\}}(\oR^d)$. The set defined by~\eqref{4.12} is a $0$-neighborhood in this space because $S^{b, B_1}_{\{1\}, A_1}$ consists of functions supported in $\{x\in\oR^d
 \colon |x|\le A_1\}$ and  is continuously embedded into the normed space with the norm $\|\cdot\|_{\mathrm w}$. Taking for $V$  this set, we conclude  as before that $u\ast Q\subset V$ implies $\sup_{h\in G}|\langle\tilde u,h\rangle|\le1$. The proof is complete.  \end{proof}

 \begin{corollary} \label{C4.4}
 The spaces  $C\bigl(\Sab\bigr)$ and  $C\bigl(\sab\bigr)$ are complete and semi-reflexive.
\end{corollary}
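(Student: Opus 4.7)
My plan is to deduce both properties directly from the topological identifications in \refT{4.1} and \refT{4.3}. By those theorems, $C\bigl(\Sab\bigr)$ is topologically isomorphic to $\varprojlim_{B\to\infty}\bigl(\mathcal E^{b,B}_{(a)}\bigr)'$ and $C\bigl(\sab\bigr)$ to $\varprojlim_{A\to 0}\bigl(\mathcal E^{(b)}_{a,A}\bigr)'$. Using the equivalences between the inductive systems $\mathcal E^{b,B}_{(a)}$ and $\mathcal E^{b,B+}_{(a)}$, and between $\mathcal E^{(b)}_{a,A}$ and $\mathcal E^{(b)}_{a,A-}$, noted in \refS{2}, I would pass to the equivalent projective systems of strong duals over the components $\bigl(\mathcal E^{b,B+}_{(a)}\bigr)'$ and $\bigl(\mathcal E^{(b)}_{a,A-}\bigr)'$. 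Since the $+$ and $-$ spaces are (FS)-spaces (Lemma~2 of~\cite{S2019-2}), their strong duals are (DFS)-spaces, hence complete and Montel.

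At this point the corollary is a consequence of two standard permanence properties: a projective limit of complete Hausdorff locally convex spaces is complete, and a projective limit of semi-reflexive spaces is semi-reflexive. Both apply because each component of the relevant projective system is (DFS), hence both complete and semi-reflexive (being Montel). For the semi-reflexivity half, one can recall the underlying reason: a bounded subset of the limit has bounded, hence relatively weakly compact, projections in every component, and identifying the limit with a closed subspace of the product of the components, equipped with the product of the weak topologies, one invokes Tychonoff's theorem to conclude weak relative compactness in the limit itself.

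I do not anticipate any real obstacle. The main care point is the bookkeeping of topologies: checking that the topological isomorphisms of \refT{4.1} and \refT{4.3}, together with the equivalences of the $+$/$-$ inductive systems explained in \refS{2}, genuinely transfer to topological identifications of the corresponding projective limits of strong duals. Both ingredients are already established in the paper, so the argument reduces to assembling them with the standard permanence results cited above.
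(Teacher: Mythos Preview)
Your proposal is correct and coincides with the paper's primary argument: it too deduces the corollary from Theorems~\ref{T4.1} and~\ref{T4.3} by invoking the heredity of completeness and semi-reflexivity under projective limits (the paper simply cites Sect.~IV.5.8 in~\cite{Sch}, while you spell out the passage to the equivalent $+/-$ systems and the (DFS) property of the components). The paper also notes a second, independent route---observing that $C\bigl(\Sab\bigr)$ and $C\bigl(\sab\bigr)$ are closed subspaces of $\mathcal L\bigl(\Sab\bigr)$ and $\mathcal L\bigl(\sab\bigr)$, which are complete and nuclear---which you may wish to mention as an alternative.
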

This directly follows from the heredity properties of projective limits (Sect.~IV.5.8 in~\cite{Sch}). The same conclusion can be made taking into account that  $C\bigl(\Sab\bigr)$ and  $C\bigl(\sab\bigr)$ are closed subspaces of $\mathcal L\bigl(\Sab\bigr)$ and $\mathcal L\bigl(\sab\bigr)$, respectively, and that $\mathcal L(E)$ is complete and nuclear (and therefore semi-reflexive) for any   nuclear (FS) or (DFS)-space $E$ (Sect.~39.6 in~\cite{K1979} and Sect.~IV.9.7 in~\cite{Sch}).

\begin{corollary} \label{C4.5}  The space  $\eab$ is canonically isomorphic to the strong dual of
 $C\bigl(\Sab\bigr)$ and  $\eabp$ is canonically isomorphic to the strong dual of $C\bigl(\sab\bigr)$.
\end{corollary}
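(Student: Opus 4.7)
The plan is to derive Corollary~\ref{C4.5} by taking strong duals of the isomorphisms in Theorems~\ref{T4.1} and~\ref{T4.3}, and then invoking reflexivity of the Palamodov spaces. The key observation is that the representations~\eqref{2.18} exhibit $\eab$ and $\eabp$ as (DFS)-spaces, so the abstract machinery of Komatsu's duality theorem for (FS)/(DFS)-type limits (Theorem~11 in~\cite{K1967}) applies cleanly.

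First I would treat the case of $\eabp$. By~\eqref{2.18}, $\eabp=\varinjlim_{A\to 0}\mathcal E_{a,A-}^{(b)}$, where each $\mathcal E_{a,A-}^{(b)}$ is an (FS)-space (as noted after~\eqref{2.17}) and the inclusion maps $\mathcal E_{a,A-}^{(b)}\hookrightarrow \mathcal E_{a,A'-}^{(b)}$ for $A'<A$ are compact, so $\eabp$ is a (DFS)-space and hence reflexive. Applying Komatsu's duality theorem, its strong dual is $(\eabp)'_\beta\cong\varprojlim_{A\to 0}\bigl(\mathcal E_{a,A-}^{(b)}\bigr)'$. Since the projective systems $\bigl(\mathcal E_{a,A}^{(b)}\bigr)'$ and $\bigl(\mathcal E_{a,A-}^{(b)}\bigr)'$ are equivalent (as noted in Section~\ref{S2}), Theorem~\ref{T4.1} gives $(\eabp)'_\beta\cong C\bigl(\sab\bigr)$. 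Taking strong duals once more and using the reflexivity of the (DFS)-space $\eabp$, we obtain
\begin{equation}
\bigl(C\bigl(\sab\bigr)\bigr)'_\beta\cong \bigl((\eabp)'_\beta\bigr)'_\beta \cong \eabp,
 \notag
\end{equation}
which is the desired canonical isomorphism. The canonicity is inherited from the canonical character of the duality pairing between $\eabp$ and $\dsab$ used in Proposition~\ref{P3.2} and Theorem~\ref{T4.1}.

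The case of $\eab$ proceeds in exactly the same manner. By~\eqref{2.18}, $\eab=\varinjlim_{B\to\infty}\mathcal E_{(a)}^{b,B+}$ with $\mathcal E_{(a)}^{b,B+}$ being (FS)-spaces connected by compact inclusions, so $\eab$ is (DFS) and reflexive. Komatsu's theorem yields $(\eab)'_\beta\cong\varprojlim_{B\to\infty}\bigl(\mathcal E_{(a)}^{b,B+}\bigr)'$, which by the equivalence of $\bigl(\mathcal E^{b,B}_{(a)}\bigr)'$ and $\bigl(\mathcal E^{b,B+}_{(a)}\bigr)'$ and Theorem~\ref{T4.3} is canonically isomorphic to $C\bigl(\Sab\bigr)$. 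Reflexivity of $\eab$ then gives $\bigl(C\bigl(\Sab\bigr)\bigr)'_\beta\cong\eab$.

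The only step requiring care is the justification that the strong dual of the projective limit $\varprojlim_{A\to 0}\bigl(\mathcal E_{a,A-}^{(b)}\bigr)'$ (and its analogue for $\eab$) really is the corresponding inductive limit — i.e.,\ that $C\bigl(\sab\bigr)$ and $C\bigl(\Sab\bigr)$ are the strong duals of $\eabp$ and $\eab$ respectively, and not merely algebraically isomorphic to them. This is ensured by Komatsu's theorem together with the compactness of the transition maps in the defining systems, the same compactness already invoked in Section~\ref{S2}. Given this, no further work beyond reflexivity is needed; the corollary is a purely formal consequence of Theorems~\ref{T4.1} and~\ref{T4.3}.
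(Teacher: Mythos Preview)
Your overall strategy---dualize Theorems~\ref{T4.1} and~\ref{T4.3} and invoke reflexivity---is the right idea, but there is a genuine gap: the claim that $\eab$ and $\eabp$ are (DFS)-spaces is not justified, and in fact the linking maps in~\eqref{2.18} are \emph{not} compact. The compactness established in Section~\ref{S2} is between the Banach spaces $\mathcal E^{b,B}_{a,A}$; it does not transfer to the inclusions $\mathcal E^{b,B+}_{(a)}\hookrightarrow \mathcal E^{b,\bar B+}_{(a)}$. Indeed, a $0$-neighborhood $\{h:\|h\|^{A_0}_{B+\epsilon_0}<1\}$ in $\mathcal E^{b,B+}_{(a)}$ is unbounded in the seminorm $\|\cdot\|^{A}_{\bar B+\epsilon}$ for any $A>A_0$, because $\sup_x w_a(|x|/A_0)/w_a(|x|/A)=\infty$. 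So Komatsu's Theorem~11 does not apply to~\eqref{2.18}, and you have neither the topological identification $(\eab)'_\beta\cong\varprojlim_{B\to\infty}\bigl(\mathcal E^{b,B+}_{(a)}\bigr)'$ nor the reflexivity of $\eab$ for free. Since reflexivity of $\eab$ is essentially equivalent to the corollary itself, invoking it is circular.

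The paper avoids this by working from the other side. It starts from the projective limit $\varprojlim_{B\to\infty}\bigl(\mathcal E^{b,B+}_{(a)}\bigr)'\cong C\bigl(\Sab\bigr)$ (Theorem~\ref{T4.3}), verifies that this projective limit is \emph{reduced} (by showing $\Sab$ is dense in each $\bigl(\mathcal E^{b,B+}_{(a)}\bigr)'$), and then applies Schaefer's duality theorem (IV.4.4 in~\cite{Sch}) to identify its Mackey dual with $\varinjlim_{B\to\infty}\bigl(\mathcal E^{b,B+}_{(a)}\bigr)''=\eab$, using only reflexivity of the individual (FS)-spaces $\mathcal E^{b,B+}_{(a)}$. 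The semi-reflexivity of $C\bigl(\Sab\bigr)$ from Corollary~\ref{C4.4} then gives strong $=$ Mackey on the dual. This route needs only reducedness and semi-reflexivity of the projective limit, both already established, and never asks whether $\eab$ itself is (DFS) or reflexive.
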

\begin{proof}
    The projective limit $\varprojlim_{B\to \infty}\bigl(\mathcal E^{b,B+}_{(a)}\bigr)'$ is reduced because  $\Sab$ is contained and dense in each  of the spaces $\bigl(\mathcal E^{b,B+}_{(a)}\bigr)'$, $B>0$. Indeed, the map $ \mathcal E^{b,B+}_{(a)}\to \dSab \colon h\to \left(f\to\int h(x)f(x)dx\right)$
     is the adjoint of the natural inclusion $\Sab\to\bigl(\mathcal E^{b,B+}_{(a)}\bigr)'$ with respect to the dualities   $\left\langle \Sab,  \dSab\right\rangle$ and $\left\langle\bigl(\mathcal E^{b,B+}_{(a)}\bigl)', \mathcal E^{b,B+}_{(a)}\right\rangle$  and is injective  because $\Sab$ has sufficiently many functions.
   Therefore $\Sab$ is weakly dense in $\bigl(\mathcal E^{b,B+}_{(a)}\bigr)'$
   and is also strongly dense  because $\bigl(\mathcal E^{b,B+}_{(a)}\bigr)'$ is a  (DFS)-space.
      By Theorem~4.4 in Ch.~IV of~\cite{Sch}, the Mackey dual of
       $\varprojlim_{B\to \infty}\bigl(\mathcal E^{b,B+}_{(a)}\bigr)'$
          is identified with $\varinjlim_{B\to\infty}\bigl(\mathcal E^{b,B+}_{(a)}\bigr)^{\prime\prime}$, i.e., with  $\eab$, because the spaces  $\mathcal E^{b,B+}_{(a)}$ are reflexive.
  Since $\varprojlim_{B\to \infty}\bigl(\mathcal E^{b,B+}_{(a)}\bigr)'$ is semi-reflexive, the strong topology on its dual coincides with the Mackey topology (Theorem~5.5 in Ch.~IV of~\cite{Sch}).
  Hence the strong dual of $\varprojlim_{B\to \infty}\bigl(\mathcal E^{b,B+}_{(a)}\bigr)'$ is identified with $\eab$ and the adjoint of the map $u\to\tilde u$   in Theorem~\ref{T4.3} is an algebraic and topological isomorphism of $\eab$ onto $C^{\,\prime}\bigl(\Sab\bigr)$. In the same way we see that $\eabp$ is isomorphic to          the strong dual of  $C\bigl(\sab\bigr)$.
\end{proof}
The statement of Corollary~\ref{C4.5} is an analog of the well-known fact established by Grothendieck (Sect.~4.4 in  Ch.~II of~\cite{Grot1955}) that the strong dual of the space of convolutors of the Schwartz space $S(\oR^d)$ is isomorphic to the space $\mathcal O_C(\oR^d)$ of very slowly increasing smooth functions.

Under an additional condition on the defining sequences  $a$ and $b$, it follows from Corollary~\ref{C4.5} combined with Theorem~1 in~\cite{S2019} that the spaces  $\eab$ and $\eabp$ are invariant under an important class of ultradifferential operators. Following Komatsu~\cite{K1973}, we write $b_n\subset a_n$ if there exist constants $C$ and $L$ such that
\begin{equation}
  b_n\le C L^n a_n\qquad \forall n\in \oZ_+.
 \label{4.13}
 \end{equation}

 \begin{corollary}\label{C4.6} Let $\mathcal Q=(\mathcal Q^{jk})$ be a  $d\times d$ matrix with real entries and let $b_n\subset a_n$. If the space $\Sab(\oR^d)$ is nontrivial, then the operator $e^{i\mathcal  Q^{jk}\partial_j\partial_k}$ is a homeomorphism of  $\eab(\oR^d)$ onto itself, and if $\sab(\oR^d)$ is nontrivial, then $e^{i\mathcal Q^{jk}\partial_j\partial_k}$ is a homeomorphism of $\eabp(\oR^d)$ onto itself.
\end{corollary}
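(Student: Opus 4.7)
The plan is to deduce this from Corollary~\ref{C4.5} by a duality argument: we transfer the action of $e^{i\mathcal Q^{jk}\partial_j\partial_k}$ from the convolutor spaces, where it is handled by Theorem~1 in~\cite{S2019}, to their strong duals, which are precisely $\eab$ and $\eabp$.

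First I would observe that the formal differential operator $P(\partial)=e^{i\mathcal Q^{jk}\partial_j\partial_k}$ is symmetric with respect to the bilinear pairing $\langle u,f\rangle$: integration by parts gives $\partial_j^{\,t}=-\partial_j$, so $(\partial_j\partial_k)^t=\partial_j\partial_k$ and hence $P(\partial)^t=P(\partial)$. Consequently, once the operator is shown to be a homeomorphism on a space $E$, its transpose acts as a homeomorphism on the strong dual $E'_b$ by the \emph{same} formal expression. Next, I would invoke Theorem~1 in~\cite{S2019}, which, under the inclusion $b_n\subset a_n$ in~\eqref{4.13}, guarantees that $P(\partial)$ is a continuous automorphism of $C\bigl(\Sab\bigr)$ and of $C\bigl(\sab\bigr)$ (continuity is the content of that theorem; bijectivity is immediate, with two-sided inverse $e^{-i\mathcal Q^{jk}\partial_j\partial_k}$ obtained by replacing $\mathcal Q$ by $-\mathcal Q$).

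Combining these two ingredients with Corollary~\ref{C4.5}, which canonically identifies $\eab$ with the strong dual of $C\bigl(\Sab\bigr)$ and $\eabp$ with the strong dual of $C\bigl(\sab\bigr)$, the transpose of $P(\partial)\colon C\bigl(\Sab\bigr)\to C\bigl(\Sab\bigr)$ is a topological automorphism of $\eab$, and this transpose is the operator $P(\partial)$ itself by the symmetry observation above; the argument for $\eabp$ is identical. Nontriviality of $\Sab$ (resp.\ $\sab$) is exactly the hypothesis required for Corollary~\ref{C4.5} (via Proposition~\ref{P3.1} and Theorems~\ref{T4.1}, \ref{T4.3}) and for the cited theorem of~\cite{S2019} to be applicable.

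The main obstacle is not in the duality mechanics but in verifying that the hypothesis $b_n\subset a_n$ is what makes $P(\partial)$ act continuously on the convolutor spaces; this is the substantive analytic input, and it is precisely where~\cite{S2019} is used. The point is that each factor $\partial_j$ enlarges the smoothness index of a symbol by a factor controlled by $b_n$, while the growth at infinity is controlled by $a_n$; the inclusion~\eqref{4.13} ensures that the resulting series $\sum (i\mathcal Q)^n(\partial^2)^n/n!$ converges in the appropriate topology on $C\bigl(\Sab\bigr)$ and $C\bigl(\sab\bigr)$. Once that input is accepted as given, the remainder of the corollary is a short and formal deduction from Corollary~\ref{C4.5}.
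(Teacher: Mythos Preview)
Your overall strategy---use Corollary~\ref{C4.5} to identify $\eab$ and $\eabp$ with the strong duals of the convolutor spaces, invoke Theorem~1 of~\cite{S2019} to get a self-homeomorphism of the convolutor spaces, and then transpose---is correct and is essentially the paper's approach. The paper executes it on the Fourier side: since the operator $e^{i\mathcal Q^{jk}\partial_j\partial_k}$ is \emph{defined} as the inverse Fourier transform of multiplication by $e^{-i\mathcal Q^{jk}\zeta_j\zeta_k}$, and since $C\bigl(\Sab\bigr)$ is carried by the Fourier transform onto $M\bigl(\Sba\bigr)$, the whole statement reduces to showing that multiplication by this Gaussian is a self-homeomorphism of $M\bigl(\Sba\bigr)$ (and of $M\bigl(\sba\bigr)$), which is immediate once one knows it is a pointwise multiplier of $\Sba$ (resp.\ $\sba$).

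The one thing to correct is your description of Theorem~1 in~\cite{S2019}. That theorem does \emph{not} assert that $P(\partial)$ acts continuously on the convolutor spaces, nor is it proved by summing the series $\sum_n (i\mathcal Q)^n(\partial^2)^n/n!$; its content is exactly that, under~\eqref{4.13}, the function $e^{-i\mathcal Q^{jk}\zeta_j\zeta_k}$ belongs to $M\bigl(\Sba\bigr)$ and $M\bigl(\sba\bigr)$. Your claim about $P(\partial)$ on $C\bigl(\Sab\bigr)$ follows from this only after passing through the Fourier transform, and your last paragraph's heuristic about derivative-series convergence is not how the analytic input works. If you insert the Fourier step explicitly (as the paper does), the argument becomes clean and the misattributed content disappears.
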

\begin{proof}
 The operator  $e^{i\mathcal Q^{jk}\partial_j\partial_k}$ is defined via the inverse Fourier transform of $e^{-i\mathcal Q^{jk}\zeta_j\zeta_k}$, as in the case with the Schwartz space $S(\oR^d)$ considered in Sect.~7.6 in~\cite{H1}. In view of Corollary~\ref{C4.5}, the statement of Corollary~\ref{C4.6} amounts to saying that the multiplication by $e^{-i\mathcal Q^{jk}\zeta_j\zeta_k}$  is a self-homeomorphism of the multiplier spaces $M\bigl(\Sba\bigr)$ and $M\bigl(\sba\bigr)$. This in turn follows from Theorem~1 in~\cite{S2019} which states that under condition~\eqref{4.13} the function $e^{-i\mathcal Q^{jk}\zeta_j\zeta_k}$ is a pointwise multiplier of $\Sba$ and  $\sba$.
\end{proof}

  For the
  case of spaces $S_\beta^\alpha$, $\beta\le\alpha$, defined by sequences of the form~\eqref{2.6}, a simple proof of the fact that $e^{-i\mathcal Q^{jk}\zeta_j\zeta_k}$ belongs to $M(S_\beta^\alpha)$ was given  in the course of proving Theorem~1 in~\cite{S2007}. Similar invariance properties of the symbol spaces  $\Gamma_s^\infty$ and $\Gamma_{0,s}^\infty$ (which coincide respectively  with $\breve{\mathcal E}^{\{a\}}_{(a)}$ and  $\breve{\mathcal E}^{(a)}_{\{a\}}$ for $a_n=n^{sn}$, as noted in Sect.~\ref{S2}) were also proved by  different methods in Theorem~4.1 of~\cite{CT}, and in  Proposition~4.4 of~\cite{CW}.

\section{The Moyal multiplier algebras for the spaces of type $S$}
\label{S5}
 Theorem~2 in~\cite{S2019} shows that if~\eqref{4.13} is satisfied, then $\Sab(\oR^{2d})$ and $\sab(\oR^{2d})$ are topological algebras under the twisted multiplication~\eqref{1.1}. ({For $S^\beta_\alpha$, $\beta\le\alpha$,
 this has been proved in~\cite{S2007}.) The Fourier transform
 converts the Weyl-Moyal product into the twisted convolution
 \begin{equation}
(\widehat f\ast_{\hbar}
\widehat g)(\zeta)= \int_{\oR^{2d}}\widehat f(\zeta')
\widehat g(\zeta-\zeta')e^{(i\hbar/2)\zeta\cdot J\zeta'}d\zeta'
 \label{5.1}
\end{equation}
(multiplied by $(2\pi)^{-d}$).
 Under the same condition~\eqref{4.13}, the spaces $\Sba=F\bigl[\Sab\bigr]$ and $\sba=F\bigl[\sab\bigr]$ are topological algebras with the twisted convolution as multiplication.  We will consider them in parallel with $\bigl(\Sab,\star_\hbar\bigr)$ and $\bigl(\sab,\star_\hbar\bigr)$.

 The products  $\langle f\star_\hbar u\rangle$  and $\langle u\star_\hbar f\rangle$  of a function  $f\in \Sab$   with an element $u$ of the dual space $\dSab$ are defined by
\begin{equation}
\langle f\star_\hbar u,h\rangle\coloneq \langle u,h\star_\hbar f\rangle ,\qquad  \langle u\star_\hbar f,h\rangle\coloneq \langle u,f\star_\hbar h\rangle\qquad \forall h\in \Sab
\label{5.2}
\end{equation}
and analogously for the dual pair  $\bigl\langle\sab,  \dsab \bigr\rangle$.
Since the right-hand sides in~\eqref{5.2} are linear and continuous in $h$, these products are well defined as elements of $\dSab$.   The formulas~\eqref{5.2}  agree with the definition of the  operation $\star_\hbar$ in $\Sab$ due to the identity
 \begin{equation}
\int(f\star_\hbar g)(x)dx=\int f(x)g(x)dx,
\label{5.3}
\end{equation}
which is equivalent to the obvious identity  $(\widehat f\ast_\hbar\widehat g)(0)=(\widehat f\ast\widehat g)(0)$. Indeed, if $f,g,h\in \Sab$, then using~\eqref{5.3} and the associativity of the Weyl-Moyl product, we obtain
\begin{equation}
\langle f\star_\hbar g, h\rangle\equiv\int(f\star_\hbar g)(x) h(x)dx= \int(h\star_\hbar(f\star_\hbar g))(x) dx =
\int((h\star_\hbar f)\star_\hbar g)(x) dx=\langle g,h\star_\hbar f\rangle.
 \notag
\end{equation}
Under condition~\eqref{4.13}, the twisted convolution product of $g\in \Sba$ and $v\in \dSba$ can also be defined by duality, namely:
\begin{equation}
\langle v\ast_\hbar g,h\rangle\coloneq \langle v,\check{g}\ast_\hbar h\rangle,\quad \langle g\ast_\hbar v,h\rangle\coloneq \langle v,h\ast_\hbar\check{g} \rangle \qquad \forall h\in \Sba,
\notag
\end{equation}
where   $\check{g}(\zeta)=g(-\zeta)$.  Then we clearly have the  relations
\begin{equation}
\widehat{u\star_\hbar f} = (2\pi)^{-d}\widehat{u}\ast_\hbar\widehat{f},\qquad
\widehat{f\star_\hbar u} = (2\pi)^{-d}\widehat{f}\ast_\hbar\widehat{u}.
\notag
\end{equation}
 The spaces of left and right multipliers for the algebra $(\Sab,\star_\hbar)$ are defined as follows:
\begin{gather}
\mM_{\hbar,L}\bigl(\Sab\bigr)\coloneq \left\{u\in \dSab\colon
u\star_\hbar f\in \Sab\quad \forall  f\in \Sab\right\},\label{5.4}
\\  \mM_{\hbar,R}\bigl(\Sab\bigr)\coloneq\left \{u\in
 \dSab\colon  f\star_\hbar u
\in \Sab\quad \forall f\in\Sab\right\},
\label{5.5}
 \end{gather}
 The definitions of $\mM_{\hbar,L}\bigl(\sab\bigr)$ and  $\mM_{\hbar,R}\bigl(\sab\bigr)$ are similar. The mappings $f\to u\star_\hbar f$ and $f\to f\ast_\hbar u$ of $\Sab$ into itself and of  $\sab$ into itself are continuous by the closed graph theorem, and the multiplier spaces are naturally endowed with the respective topologies induced by  $\mathcal L\bigl(\Sab\bigr)$  and $\mathcal L\bigl(\sab\bigr)$. Theorem~3 in~\cite{S2019}  establishes that under condition~\eqref{4.13}, the spaces ${\mM}_{\hbar,L}\bigl(\Sab\bigr)$,  ${\mM}_{\hbar,R}\bigl(\Sab\bigr)$, ${\mM}_{\hbar,R}\bigl(\sab\bigr)$,  and ${\mM}_{\hbar,L}\bigl(\sab\bigr)$ are unital  algebras with separately continuous multiplication $\star_\hbar$. (For the case of
  spaces $S^\beta_\alpha$, $\beta\le\alpha$, this was proved in ~\cite{S2011}.) The Fourier transformation maps ${\mM}_{\hbar,L}\bigl(\Sab\bigr)$ and  ${\mM}_{\hbar,R}\bigl(\Sab\bigr)$
  respectively onto the algebras of left and right twisted convolution  multipliers
 \begin{gather}
\mathcal C_{\hbar,L}\bigl(\Sba\bigr)= \left\{v\in \dSba\colon
v\ast_\hbar g\in \Sba\quad \forall  g\in \Sba\right\}, \label{5.6}
\\
  \mathcal C_{\hbar,R}\bigl(\Sba\bigr)=\left \{v\in
 \dSba\colon  g\ast_\hbar v
\in \Sba\quad \forall g\in \Sba\right\}.
\label{5.7}
 \end{gather}
Analogously,  $\widehat{\mM}_{\hbar,L}\bigl(\sab\bigr)=\mathcal C_{\hbar,L}\bigl(\sba\bigr)$ and $\widehat{\mM}_{\hbar,R}\bigl(\sab\bigr)=\mathcal C_{\hbar,R}\bigl(\sba\bigr)$.

 \begin{lemma} \label{L5.1} Let $b_n\subset a_n$. The algebras $\bigl(\Sab,\star_\hbar\bigr)$, $\bigl(\sab,\star_\hbar\bigr)$, $\bigl(\Sba,\ast_\hbar\bigr)$, and $\bigl(\sba,\ast_\hbar\bigr)$,   have sequential approximate identities.
\end{lemma}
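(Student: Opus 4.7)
The plan is to construct a single sequence $(e_n)$ that serves simultaneously as a two-sided sequential approximate identity in all four algebras, exploiting the fact that the Fourier transform is a topological isomorphism of each space onto its Fourier image and converts $\star_\hbar$ into $(2\pi)^{-d}\ast_\hbar$.

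First I would fix $\phi\in\sab(\oR^{2d})$ with $\phi(0)=1$. Such a $\phi$ exists because $\sab$ is nontrivial and translation-invariant (the latter follows easily from~\eqref{2.11}), so a suitable translate of a nontrivial element can be rescaled to have unit value at the origin. Define $e_n(x)\coloneq\phi(x/n)$; a direct computation from~\eqref{2.5} gives $\|e_n\|_{nA,B/n}\le\|\phi\|_{A,B}$ for every $A,B>0$, so $e_n\in\sab\subset\Sab$ for each $n$. On the Fourier side $\widehat{e_n}(\zeta)=n^{2d}\widehat\phi(n\zeta)$ is a concentration of $\widehat\phi\in\sba$ at the origin, and Fourier inversion gives $(2\pi)^{-d}\int\widehat\phi(u)\,du=\phi(0)=1$, so $(2\pi)^{-d}\widehat{e_n}$ is an ordinary delta-sequence on $\oR^{2d}$. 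The four assertions of the lemma therefore reduce to showing that for every $g$ in $\Sba$ (respectively $\sba$),
\begin{equation*}
(2\pi)^{-d}\widehat{e_n}\ast_\hbar g \to g \qquad\text{and}\qquad (2\pi)^{-d} g\ast_\hbar \widehat{e_n} \to g
\end{equation*}
in the topology of $\Sba$ (respectively $\sba$).

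To prove this, I would substitute $u=n\zeta'$ in the defining integral~\eqref{5.1} to obtain the compact representation
\begin{equation*}
(2\pi)^{-d}(\widehat{e_n}\ast_\hbar g)(\zeta) - g(\zeta) = (2\pi)^{-d}\int \widehat\phi(u)\bigl[g(\zeta - u/n)\,e^{(i\hbar/2n)\zeta\cdot Ju} - g(\zeta)\bigr]\,du,
\end{equation*}
and split the bracket as $[g(\zeta-u/n)-g(\zeta)]e^{(i\hbar/2n)\zeta\cdot Ju}+g(\zeta)[e^{(i\hbar/2n)\zeta\cdot Ju}-1]$. The first piece is $O(|u|/n)$ by the mean-value theorem applied to $g$, and the second is bounded by $(\hbar/2n)|\zeta||u|$. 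Applying $\partial_\zeta^\beta$ and the product rule produces analogous terms in higher derivatives of $g$ together with additional factors of $(\hbar u/n)^\gamma$ from differentiating the exponential. The extra powers of $|u|$ are absorbed by the super-rapid decay of $\widehat\phi$, and the extra powers of $|\zeta|$ are absorbed into the weight $w_a$ at the cost of an $H$-dilation via Lemma~\ref{L4.2}. Integration against $\widehat\phi(u)$ then contributes the overall factor $1/n$ and yields convergence to zero in each required norm $\|\cdot\|_{A',B'}$. The right-sided statement $g\ast_\hbar\widehat{e_n}\to(2\pi)^d g$ is handled identically.

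The main obstacle I anticipate is parameter bookkeeping: tracking how the shifts of the weight parameter $A$ (from Lemma~\ref{L4.2}) and of the smoothness parameter $B$ (from~\eqref{2.3}) combine so that convergence holds at the level of the inductive limits $\Sab,\Sba$ and the projective limits $\sab,\sba$. For $\sab$ the estimates must close for arbitrarily small $A',B'$, whereas for $\Sab$ (by regularity of the inductive limit) they need only close in some fixed Banach step $S^{b,B'}_{a,A'}$; the standing hypothesis~\eqref{4.13} is what guarantees that the interplay between $a$ and $b$ is compatible with the scaling that $e_n$ induces against the twisted phase.
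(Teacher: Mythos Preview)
Your approach is correct and genuinely different from the paper's. The paper also works on the twisted-convolution side with a dilated delta-sequence $e_n(\zeta)=n^{2d}e_1(n\zeta)$, but instead of estimating the difference $e_n\ast_\hbar g-g$ directly, it only proves that the sequence $(e_n\ast_\hbar g)_n$ is \emph{bounded} in a fixed Banach step $S^{a,A_2}_{b,B_2}$ (via an estimate like your $\gamma$-expansion, using $b_n\subset a_n$ to convert powers of $|\zeta'|$ into $a$-factors). It then invokes the Montel property: a bounded sequence in $\Sba$ (or $\sba$) is relatively compact, pointwise convergence to $g$ is obvious since $e_n$ is a delta-sequence, and any limit point in a topology finer than pointwise convergence must be $g$; hence the whole sequence converges to $g$. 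This sidesteps exactly the ``parameter bookkeeping'' you flag as the main obstacle: one never needs to squeeze out the explicit $1/n$ or track how the mean-value and Lemma~\ref{L4.2} shifts in $A'$ and $B'$ interact. Your direct estimate, by contrast, yields a quantitative rate $O(1/n)$ in each norm, which the Montel argument does not give; the price is precisely the bookkeeping you anticipate, which is doable but tedious.

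One small point: you take $\phi\in\sab$, but the statement for $\bigl(\Sab,\star_\hbar\bigr)$ should not presuppose that $\sab$ is nontrivial. The paper takes $e_1\in\Sba$ (positive, integral one) and only uses that $e_1$ lies in $\sba$ when treating the projective case. Your argument goes through with $\phi\in\Sab$ for the inductive spaces; the stronger hypothesis $\phi\in\sab$ is only needed to close the estimates for arbitrarily small $A',B'$ in the projective spaces.
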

\begin{proof} Since the algebra $\bigl(\Sab,\star_\hbar\bigr)$ is isomorphic, via the Fourier transform, to  the algebra $\bigl(\Sba,\ast_\hbar\bigr)$ and $\bigl(\sab,\star_\hbar\bigr)$  is isomorphic to $\bigl(\sba,\ast_\hbar\bigr)$, it suffices to consider the case of twisted convolution.
  Every nontrivial space $\Sba(\oR^{2d})$ contains a function $e_1(\zeta)$ such that $\int_{\oR^{2d}} e_1(\zeta)d\zeta=1$ and $e_1(\zeta)\ge 0$, because it is an algebra under pointwise multiplication.  We claim that the sequence $e_n(\zeta)= n^{2d} e_1(n\zeta)$, $n\in\oN$, is an approximate identity for $\bigl(\Sba,\ast_\hbar\bigr)$, i.e., that for any $g\in \Sba$, the limit relations  $e_n\ast_\hbar g\to g$ and  $g\ast_\hbar e_n\to g$ hold in the topology of $\Sab$ as $n\to\infty$. Clearly, $(e_n\ast_\hbar g)(\zeta)\to g(\zeta)$ and  $(g\ast_\hbar e_n)(\zeta)\to g(\zeta)$ at every $\zeta$ because $e_n$ is a delta-like sequence. We show that the sequence  $e_n\ast_\hbar g$ is bounded in  $\Sba$. Using~\eqref{2.4} and  the inequality $t^k\le {A'}^k a_k\, w_a(t/A')$, valid for any $A'>0$, and also the inequality $w_a(t)\le Cw_b(Lt)$ following from~\eqref{4.13}, we obtain
\begin{multline}
|\partial^\alpha(e_n\ast_\hbar g)(\zeta)|\le
\int_{\oR^{2d}}e_n(\zeta')\left|\partial^\alpha_\zeta \left(e^{(i\hbar/2)\zeta\cdot J\zeta'} g(\zeta-\zeta')\right)\right|d\zeta'\le \\
\le  \|g\|_{A,B}\sum_{\gamma\le\alpha}{\alpha\choose\gamma} A^{|\alpha-\gamma|}a_{|\alpha-\gamma|}
\int_{\oR^{2d}}\frac{e_n(\zeta')\,(\hbar|\zeta'|/2)^{|\gamma|}}{w_b(|\zeta-\zeta'|/B)}d\zeta' \le \\ \le C \|g\|_{A,B} (A+A'\hbar/2)^{|\alpha|} a_{|\alpha|} \int_{\oR^{2d}}\frac{e_n(\zeta')\,w_b(L|\zeta'|/A')}{w_b(|\zeta-\zeta'|/B)}d\zeta'.
 \label{5.8}
 \end{multline}
  The function $e_1$ belongs to  $S^{a, A_1}_{b,B_1/H}$ with sufficiently large $A_1$ and $B_1$ and   by~\eqref{2.11} we have
  \begin{equation}
\frac{1}{w_b(|\zeta-\zeta'|/B)}\le\frac{w_b(|\zeta'|/(HB_1))}{w_b(|\zeta|/(B+HB_1))}.
\label{5.9}
 \end{equation}
 Let $A'=LHB_1$. Then it follows from~\eqref{5.8}, \eqref{5.9} and~\eqref{2.13} that
  \begin{equation}
|\partial^\alpha(e_n\ast_\hbar g)(\zeta)|\le  CK \|g\|_{A,B} \frac{(A+LHB_1\hbar/2)^{|\alpha|} a_{|\alpha|} }{w_b(|\zeta|/(B+HB_1))}\int_{\oR^{2d}} e_n(\zeta')\,w_b(|\zeta'|/B_1)d\zeta'.
\label{5.10}
 \end{equation}
 Since
  $e_n(\zeta')\le C_n/ w_b(H|\zeta'|/B_1)$, the integral in the right-hand side is finite for any $n$ in view~\eqref{2.13}
   and tends to  $w_b(0)=1$ as $n\to\infty$. Therefore, the sequence  $e_n\ast_\hbar g$ is contained and  bounded in $S^{a, A_2}_{b,B_2}$, where $A_2=A+LHB_1\hbar/2$ and  $B_2=B+HB_1$. This sequence
    has at least one limit point in the Montel space $\Sba$.
 The topology of $\Sba$  is finer than  the topology  of simple convergence, and only $g$ can be the limit point.  Hence $e_n\ast_\hbar g$ converges to  $g$  in $\Sba$, because otherwise it would have a limit point other than $g$, which is impossible. Similarly $g\ast_\hbar e_n\to g$ in $\Sba$. The proof is  simpler in the case where $b_n\equiv 1$. Then $e_1\in\mathcal D^{\{a\}}=S^{\{a\}}_{\{1\}}$ is supported in a compact set $\{x \colon|x|\le B_1\}$ and the integral in the second line of~\eqref{5.8} is clearly less than $(B_1\hbar/2)^{|\gamma|}a_{|\gamma|} \int_{|\zeta-\zeta'|\le B} e_n(\zeta')d\zeta'$,
 which immediately implies the boundedness of the sequence $e_n\ast_\hbar g$ in $\mathcal D^{\{a\}}$.

  In the case of spaces $\sba$, the norm  $\|g\|_{A,B}$ is finite for any $A,B>0$,  and $e_1\in S^{a, A_1}_{b,B_1/H}$ for arbitrarily small $A_1$ and $B_1$. Hence, the same estimate~\eqref{5.10} shows that the sequence $e_n\ast_\hbar g$ is bounded in $\sba$.  The rest of the proof is the same as for $\Sba$, because $\sba$ is also a Montel space. \end{proof}

The above proof generalizes and simplifies a proof given in~\cite{S2011} for $(S^\beta_\alpha,\star_\hbar)$, $\beta\le\alpha$.

 \begin{theorem}\label{T5.2}  The algebra  $\mM_{\hbar,L}\bigl(\Sab\bigr)$  is canonically identified with the closure in $\mathcal L\bigl(\Sab\bigr)$ of the set of all operators of the left $\star_\hbar$-multiplication by elements of $\Sab$. This closure consists of  all $V\in\mathcal L\bigl(\Sab\bigr)$ such that
\begin{equation}
V(f\star_\hbar g)=V(f)\star_\hbar g\qquad \forall f,g\in \Sab.
\label{5.11}
 \end{equation}
   A similar statement holds for $\mM_{\hbar,R}\bigl(\Sab\bigr)$, with the replacement of  the left $\star_\hbar$-multiplication by the right $\star_\hbar$-multiplication and with
   the condition
   $V(f\star_\hbar g)=f\star_\hbar V(g)$ instead of~\eqref{5.11}. Analogous statements are true for $\mM_{\hbar,L}\bigl(\sab\bigr)$ and $\mM_{\hbar,R}\bigl(\sab\bigr)$.
   \end{theorem}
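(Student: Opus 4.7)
The statement amounts to proving that three descriptions of $\mM_{\hbar,L}\bigl(\Sab\bigr)$ coincide: (A) the definition~\eqref{5.4} realized as operators $L_u\colon f\mapsto u\star_\hbar f$ in $\mathcal L\bigl(\Sab\bigr)$; (B) the closure in $\mathcal L\bigl(\Sab\bigr)$ of $\{L_f\colon f\in\Sab\}$; and (C) the set of $V\in\mathcal L\bigl(\Sab\bigr)$ satisfying~\eqref{5.11}. The plan is to prove the inclusions (A)$\subset$(C), (C)$\subset$(A), and (A)$\subset$(B), and to observe that (C) is closed in $\mathcal L\bigl(\Sab\bigr)$; together these force (A)=(B)=(C). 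Injectivity of $u\mapsto L_u$ is then automatic, since $L_u=0$ implies $\langle u,f\star_\hbar h\rangle=0$ for all $f,h\in\Sab$, and the linear span of such products is dense in $\Sab$ by Lemma~\ref{L5.1}.

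The inclusion (A)$\subset$(C) is immediate from~\eqref{5.2} and the associativity of $\star_\hbar$ on $\Sab$: pairing both sides of~\eqref{5.11} (with $V=L_u$) against an arbitrary $h\in\Sab$ reduces the claim to $\langle u,(f\star_\hbar g)\star_\hbar h\rangle=\langle u,f\star_\hbar(g\star_\hbar h)\rangle$, which is just associativity in $\Sab$. For the converse (C)$\subset$(A), which is the key step, I would define $u\in\dSab$ directly by
\[
\langle u,\phi\rangle\coloneq \int_{\oR^{2d}} V(\phi)(x)\,dx,\qquad \phi\in\Sab.
\]
This is a continuous linear functional on $\Sab$ as the composition of $V\in\mathcal L\bigl(\Sab\bigr)$ with the continuous integration map $\int\colon\Sab\to\oC$. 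A direct computation using~\eqref{5.3} and~\eqref{5.11} yields
\[
\langle u\star_\hbar f,h\rangle=\langle u,f\star_\hbar h\rangle=\int V(f\star_\hbar h)\,dx=\int (V(f)\star_\hbar h)\,dx=\int V(f)(x)\,h(x)\,dx=\langle V(f),h\rangle,
\]
so $u\star_\hbar f=V(f)\in\Sab$ for every $f\in\Sab$; hence $u\in\mM_{\hbar,L}\bigl(\Sab\bigr)$ with $L_u=V$, completing (C)$\subset$(A).

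For (A)$\subset$(B), I would approximate $L_u$ by $L_{u\star_\hbar e_n}$, where $\{e_n\}$ is the sequential approximate identity from Lemma~\ref{L5.1}; the elements $u\star_\hbar e_n$ lie in $\Sab$ by the definition of $\mM_{\hbar,L}$, and associativity gives $(L_{u\star_\hbar e_n}-L_u)(g)=u\star_\hbar(e_n\star_\hbar g-g)$, which tends to zero in $\Sab$ pointwise in $g$ by Lemma~\ref{L5.1} and the continuity of $L_u$. Upgrading pointwise to uniform convergence on an arbitrary bounded set $Q\subset\Sab$ is the principal technical obstacle: by regularity of the (DFS)-inductive limit $Q$ lies in some step $S^{b,B}_{a,A}$; the explicit bound underlying Lemma~\ref{L5.1} keeps $\{e_n\star_\hbar g-g\colon n\in\oN,\,g\in Q\}$ in a bounded subset of a larger step $S^{b,B_2}_{a,A_2}$; the compactness of the embedding into a still larger step, combined with pointwise convergence, then forces uniform convergence on $Q$, and $L_u$ transports this to convergence in $\mathcal L\bigl(\Sab\bigr)$. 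Closedness of (C) in $\mathcal L\bigl(\Sab\bigr)$ is immediate from the separate continuity of $\star_\hbar$, which forces (B)$\subset$(C). The analogous statements for $\mM_{\hbar,R}\bigl(\Sab\bigr)$ and for the projective algebras $\mM_{\hbar,L}\bigl(\sab\bigr)$, $\mM_{\hbar,R}\bigl(\sab\bigr)$ follow by reversing the order of factors and by replacing the (DFS)-regularity argument with the (FS)-space structure of $\sab$.
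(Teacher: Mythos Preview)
Your proof is correct and follows the paper's route almost verbatim: the same (A)$=$(B)$=$(C) scheme, the same functional $\langle u,\phi\rangle=\int V(\phi)\,dx$ to recover $u$ from $V$, and the same use of the approximate identity $e_n$ for injectivity and for approximation. The one place you work harder than necessary is the upgrade from pointwise to uniform-on-bounded convergence of $L_{u\star_\hbar e_n}\to L_u$: instead of your explicit-bound-plus-compact-embedding argument (which, to be fully rigorous, really uses the \emph{equicontinuity} of the maps $g\mapsto e_n\star_\hbar g$, not merely the boundedness of their images), the paper simply invokes the Banach--Steinhaus theorem via the barrelledness of $\Sab$ together with its Montel property, obtaining convergence in $\mathcal L\bigl(\Sab\bigr)$ in one line.
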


\begin{proof} Let  $u\in \mM_{\hbar,L}\bigl(\Sab\bigr)$ and let  $L_u$ denote the map $f\to u\star f$ from $\Sab$ to itself. (For brevity, we temporarily suppress the index $\hbar$ in the notation of the star product).
  Using an approximate identity $e_n$ for $\bigl(\Sab,\star\bigr)$ and passing to the limit in  $\langle u,e_n\star f\rangle=\langle u\star e_n,f\rangle$, we find that the map $u\to L_u$ of $\mM_{\hbar,L}\bigl(\Sab\bigr)$ to  $\mathcal L\bigl(\Sab\bigr)$ is injective. It follows from the definition~\eqref{5.2} and from the associativity of the $\star$-multiplication in $\Sab$ that \begin{equation}
 u\star(f\star g)=(u\star f)\star g,\qquad (f\star g)\star u=f\star(g\star u).
 \label{5.12}
 \end{equation}
  In terms of  $L_u$, the first of these relations takes the form $L_u(f\star g)=L_u(f)\star g$.
On the other hand, to every  $V\in \mathcal L\bigl(\Sab\bigr)$ there corresponds  a unique
 $v\in \dSab$ such that
\begin{equation}
\langle v,f\rangle=\int V(f)\,dx.
 \notag
 \end{equation}
If $V$ satisfies~\eqref{5.11}, then~\eqref{5.2} and~\eqref{5.3} give
\begin{equation}
\langle v\star f, g\rangle=\langle v,f\star g\rangle=\int V(f\star g)\,dx=\int V(f) g\,dx.
 \notag
 \end{equation}
Hence, $v\star f=V(f)$, $v\in\mM_{\hbar,L}\bigl(\Sab\bigr)$, and $L_v=V$. The relation $V(e_n\star f)= V(e_n)\star f$ implies that the sequence  of  operators of the left
$\star$-multiplication by $V(e_n)\in \Sab$ converges pointwise  to
 $V$. Since   $\Sab$ is barrelled, the convergence is uniform on every precompact subset of $\Sab$ by the Banach-Steinhaus theorem  (Theorem 4.6 in Ch.~III in~\cite{Sch}). Since, further,   $\Sab$ is a Montel space, every its bounded subset is precompact, and we conclude that the sequence in question converges to  $V$ in the topology of $\mathcal L\bigl(\Sab\bigr)$. On the other hand, if
  $V\in \mathcal L\bigl(\Sab\bigr)$ is the limit of a net of operators of the left
 $\star$-multiplication by $h_\nu\in \Sab$, then
\begin{equation}
V(f\star g)=\lim_\nu h_\nu\star(f\star g)=\lim_\nu
(h_\nu\star f)\star g= V(f)\star g
 \notag
 \end{equation}
 and hence $V$ satisfies~\eqref{5.2}. The case of right multipliers is treated similarly,  using the second of relations~\eqref{5.12}. The same arguments apply to  $\mM_{\hbar,L}\bigl(\sab\bigr)$ and $\mM_{\hbar,R}\bigl(\sab\bigr)$, which completes the proof.
\end{proof}

 \begin{remark}\label{R5.3} Similar theorems hold  for  $\mathcal C_{\hbar,L}\bigl(\Sba\bigr)$,  $\mathcal C_{\hbar,L}\bigl(\sba\bigr)$, $\mathcal C_{\hbar,R}\bigl(\Sba\bigr)$, and  $\mathcal C_{\hbar,R}\bigl(\sba\bigr)$. These algebras can also be characterized in another way.
 Proposition~2 in~\cite{S2012-I} shows that  $\mathcal C_{\hbar,L}\bigl(\Sba\bigr)$ and  $\mathcal C_{\hbar,L}\bigl(\sba\bigr)$ can be identified with the respective sets of those operators in $\mathcal L\bigl(\Sba\bigr)$ and in $\mathcal L\bigl(\sba\bigr)$ that commute with the twisted translations $\tau_\xi\colon g(\zeta)\to e^{(i\hbar/2)\xi\cdot J\zeta}g(\zeta-\xi)$, $\xi\in \oR^{2d}$. Analogous statements are valid for $\mathcal C_{\hbar,R}\bigl(\Sba\bigr)$ and  $\mathcal C_{\hbar,R}\bigl(\sba\bigr)$, but  with  $\bar\tau_\xi\colon g(\zeta)\to e^{-(i\hbar/2)\xi\cdot J\zeta}g(\zeta-\xi)$ in place of $\tau_\xi$. Theorem~\ref{T5.2} and the above characterizations  hold  true at  $\hbar=0$, i.e., in the case of pointwise multiplication and ordinary convolution.
\end{remark}

In the sequel, we consider the spaces of two-sided multipliers
\begin{gather}
\hspace{-1.1mm}\mM_\hbar\bigl(\Sab\bigr)=\mM_{\hbar,L}\bigl(\Sab\bigr)\bigcap \mM_{\hbar,R}\bigl(\Sab\bigr),\quad\!\!\!
\mM_\hbar\bigl(\sab\bigr)=\mM_{\hbar, L}\bigl(\sab\bigr)\bigcap\mM_{\hbar, R}\bigl(\sab\bigr)\label{5.13} \\
{\mathcal C}_\hbar\bigl(\Sba\bigr)={\mathcal C}_{\hbar,L}\bigl(\Sba\bigr)\bigcap
{\mathcal C}_{\hbar,R}\bigl(\Sba\bigr),\quad
{\mathcal C}_\hbar\bigl(\sba\bigr)={\mathcal C}_{\hbar, L}\bigl(\sba\bigr)\bigcap{\mathcal C}_{\hbar, R}\bigl(\sba\bigr).
\label{5.14}
 \end{gather}
  The space $\mM_\hbar\bigl(\Sab\bigr)$ is naturally endowed with the initial topology with respect to the inclusion maps $\mM_\hbar\bigl(\Sab\bigr)\to \mM_{\hbar,L}\bigl(\Sab\bigr)$ and $\mM_\hbar\bigl(\Sab\bigr)\to
  \mM_{\hbar,R}\bigl(\Sab\bigr)$. The spaces $\mM_\hbar\bigl(\sab\bigr)$, ${\mathcal C}_\hbar\bigl(\Sba\bigr)$, and ${\mathcal C}_\hbar\bigl(\sba\bigr)$ are topologized in the same manner. All these spaces are unital involutive algebras with separately continuous multiplication (for more detail, see~\cite{S2019} and also~\cite{S2011} for the case of spaces $S_\alpha^\beta$).
  We note that  multiplication in  $\mM_\hbar\bigl(\Sab\bigr)$ and in $\mM_\hbar\bigl(\sab\bigr)$ can be defined by either of the two formulas
\begin{equation}
\langle u\star v,f\rangle\coloneq\langle u, v\star f\rangle,\qquad
\langle u\star v,f\rangle\coloneq\langle v,f\star u\rangle.
  \notag
 \end{equation}
Indeed, replacing
 $f$ by $e_n\star f$ and using~\eqref{5.12} and then~\eqref{5.2}, we can write their right-hand sides  as
 $\int (f\star u)(v\star e_n) dx$. Passing to the limit as
$n\to\infty$ and using  the continuity of the maps $f\to v\star f$
and $f\to  f\star u$, we see that these  definitions are equivalent.

\section{Inclusion relations between the Moyal multiplier algebras  and  spaces of type $\mathscr E$}
\label{S6}
Along with the inclusions~\eqref{3.1},  we  have the  continuous inclusions
\begin{equation}
 \Sba\hookrightarrow \Eba,\qquad \sba \hookrightarrow  \Ebap,
  \label{6.1}
 \end{equation}
 where $\Eba$ and $\Ebap$ are defined by~\eqref{2.15}.
 (Recall that the upper index determines the smoothness of the space elements, and the lower index determines their behavior at infinity.)  Lemmas~3 and 4 in~\cite{S2019-2} show that these inclusions are dense.
  Therefore, $\dEba$ and $\dEbap$ are naturally identified with the respective vector subspaces of  $\dSba$ and $\dsba$. It follows from~\eqref{2.17} that $\dEba\subset\deba$ and $\dEbap\subset\debap$. The noncommutative deformation of convolution  violates the inclusion relations~\eqref{3.5}, but Theorem~4 in~\cite{S2019-2} shows that under condition~\eqref{4.13}, the  inclusions
 \begin{equation}
 \dEba\subset\mathcal C_\hbar\bigl(\Sba\bigr),\qquad \dEbap \subset \mathcal C_\hbar\bigl(\sba\bigr)
  \label{6.2}
 \end{equation}
 are valid. They are the starting point for  deriving other inclusion relationships in this section.

 \begin{theorem}\label{T6.1} The inclusions~\eqref{6.2} are continuous.
\end{theorem}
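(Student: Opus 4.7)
The algebraic inclusions in~\eqref{6.2} are already provided by Theorem~4 of~\cite{S2019-2}, so my task is only to check that the inclusion maps are continuous with respect to the topologies induced from $\mathcal L\bigl(\Sba\bigr)$ and $\mathcal L\bigl(\sba\bigr)$ on the multiplier algebras $\mathcal C_\hbar\bigl(\Sba\bigr)$ and $\mathcal C_\hbar\bigl(\sba\bigr)$. My plan is to repeat the argument of Proposition~\ref{P3.3} almost verbatim, with ordinary convolution replaced by the twisted convolution~\eqref{5.1}, once a twisted analogue of the Banach-space estimate~\eqref{3.4} is in place. Because $\mathcal C_\hbar\bigl(\Sba\bigr)$ is equipped with the initial topology from its two factors in~\eqref{5.14}, it suffices to verify the continuity separately into $\mathcal C_{\hbar,L}\bigl(\Sba\bigr)$ and $\mathcal C_{\hbar,R}\bigl(\Sba\bigr)$, and likewise for $\sba$.

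The central quantitative ingredient is an estimate of the form $\|v\ast_\hbar g\|_{A_1,B_1}\le C\,\|v\|^A_B\,\|g\|_{A_0,B_0}$, where the parameters $A_1,B_1$ depend affinely on $A,A_0,B,B_0$ with an extra $\hbar$-dependent term, and $v$ is controlled by its norm as a functional on some step $\mathcal E^{a,A}_{b,B}$. To obtain it I would differentiate
\begin{equation*}
(v\ast_\hbar g)(\zeta)=\bigl\langle v(\zeta'),\,g(\zeta-\zeta')\,e^{(i\hbar/2)\zeta\cdot J\zeta'}\bigr\rangle,
\end{equation*}
apply the Leibniz rule to split derivatives between $g(\zeta-\zeta')$ and the phase $e^{(i\hbar/2)\zeta\cdot J\zeta'}$, and bound the resulting test function by the norm $\|v\|^A_B$. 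The polynomial factors $(i\hbar J\zeta'/2)^\gamma$ produced by the phase are then absorbed into the weight $w_b$ via Lemma~\ref{L4.2} applied to the sequence $b_n$, while the product of weights $1/[w_b(|\zeta-\zeta'|/A_0)\,w_b(|\zeta'|/A)]$ is handled exactly as in~\eqref{3.3} through~\eqref{2.12}. A mirror-image calculation with $\check v$ in place of $v$ produces the analogous estimate for the right twisted convolution $g\ast_\hbar v$.

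Once this estimate is established, the topological part is a transcription of Proposition~\ref{P3.3}. I would use the representation $\dEba\cong\varprojlim_{B\to\infty}\bigl(\mathcal E^{a,B+}_{(b)}\bigr)'$, realize each step as an inductive limit over compact inclusions via Komatsu's theorem, and then, given a basic zero-neighborhood $W_{Q,V}\subset\mathcal L\bigl(\Sba\bigr)$ determined by a bounded $Q\subset\Sba$ and a $0$-neighborhood $V\subset\Sba$, exhibit a $0$-neighborhood $U\subset\dEba$ of the same shape $\bigcup_n\sum_{m\le n}U_m$ as in the proof of Proposition~\ref{P3.3} such that both left and right twisted convolution by every $v\in U$ sends $Q$ into $V$. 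The projective case $\dEbap\hookrightarrow\mathcal C_\hbar\bigl(\sba\bigr)$ is handled on the steps $\mathcal E^{(a)}_{b,A-}$ by the same mechanism, following the second half of the proof of Proposition~\ref{P3.3}.

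The main obstacle I expect is bookkeeping rather than conceptual: one must track the affine dependence of $(A_1,B_1)$ on $(A,A_0,B,B_0)$ together with the $\hbar$- and $H$-dependent corrections generated by the phase, in order to choose the $U_m$ coherently along an injective defining sequence of steps. Condition~\eqref{4.13} underlies both Lemma~\ref{L4.2} and the applicability of Theorem~4 of~\cite{S2019-2}, and enters here by ensuring that the polynomial prefactors coming from the phase are genuinely absorbed by $w_b$ without destroying the inductive-limit structure of the multiplier algebras.
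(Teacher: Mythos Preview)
Your plan contains a structural confusion that undermines the topological half of the argument, and it also diverges from the paper's method.

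The limit-structure issue. You propose to use a representation of the form $\dEba\cong\varprojlim_{B\to\infty}\bigl(\cdots\bigr)'$ and to transcribe the $W_{Q,V}$--$U$ construction of Proposition~\ref{P3.3}. But $\Eba=\bigcap_{B\to\infty}\bigcup_{A\to\infty}\mathcal E^{a,A}_{b,B}$ is a \emph{projective} limit (in the outer index~$B$) of the (DFS)-spaces $\mathcal E^{\{a\}}_{b,B+}$; consequently its strong dual is the \emph{inductive} limit $\dEba=\varinjlim_{B\to\infty}\bigl(\mathcal E^{\{a\}}_{b,B+}\bigr)'$ of (FS)-spaces, and likewise $\dEbap=\varinjlim_{A\to0}\bigl(\mathcal E^{a,A-}_{\{b\}}\bigr)'$. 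You are conflating $\Eba$ with the ``breve'' space $\eab$: it is the latter whose dual is the projective limit treated in Proposition~\ref{P3.3}. With the correct inductive description, continuity out of $\dEba$ is verified simply by restricting to each step $\bigl(\mathcal E^{\{a\}}_{b,B+}\bigr)'$; the Proposition~\ref{P3.3} machinery is neither needed nor directly applicable here.

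The method. The paper does not derive a twisted analogue of~\eqref{3.4} at all. Having reduced to continuity of each step map $\bigl(\mathcal E^{\{a\}}_{b,B+}\bigr)'\to\mathcal C_\hbar\bigl(\Sba\bigr)$, it observes that for fixed $g\in\Sba$ the map $v\mapsto v\ast_\hbar g$ has closed graph (a one-line computation from the duality formula for $\ast_\hbar$), applies the closed graph theorem between the (FS)-space $\bigl(\mathcal E^{\{a\}}_{b,B+}\bigr)'$ and the (DFS)-space $\Sba$, and then upgrades pointwise-in-$g$ convergence to convergence in $\mathcal L\bigl(\Sba\bigr)$ via Banach--Steinhaus together with the Montel property of $\Sba$. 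Sequential continuity suffices because the steps are metrizable. The same argument handles $\dEbap$.

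Your explicit-estimate strategy is not wrong in principle---the bound you describe is essentially what underlies Theorem~4 of~\cite{S2019-2}---but once you correct the inductive structure of $\dEba$ and $\dEbap$, the topological part becomes far simpler than Proposition~\ref{P3.3}, and the paper's soft argument shows that the estimate can be bypassed altogether.
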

\begin{proof} As in the case of spaces~\eqref{2.16}, it is useful to represent  $\Eba$ and $\Ebap$ as limits of families of spaces with nice topological properties, namely:
    \begin{equation}
 \Eba=\varprojlim_{B\to\infty}\mathcal E^{\{a\}}_{b,B+},\qquad \Ebap=\varprojlim_{A\to0}  \mathcal E^{a,A-}_{\{b\}},
  \label{6.3}
 \end{equation}
 where
\begin{equation}
\mathcal E^{\{a\}}_{b,B+}\coloneq\varinjlim\limits_{A\to\infty,\epsilon\to0}\mathcal E_{b,B+\epsilon}^{a,A},
\qquad
\mathcal E^{a,A-}_{\{b\}}\coloneq\varinjlim\limits_{B\to0,\epsilon\to0}\mathcal E^{a,A-\epsilon}_{b,B}  .
\notag
 \end{equation}
 By Lemma~2 in~\cite{S2019-2}, $\mathcal E^{\{a\}}_{b,B+}$ and $\mathcal E^{a,A-}_{\{b\}}$ are (DFS)-spaces.
    It follows that the projective limits~\eqref{6.3} are semi-reflexive.
  Furthermore, Lemmas~3 and 4 in~\cite{S2019-2} show that they are reduced.
   The duals $\bigl(\mathcal E^{\{a\}}_{b,B+}\bigr)'$ and $\bigl(\mathcal E^{a,A-}_{\{b\}}\bigr)'$ are (FS)-spaces and therefore Mackey spaces.
    Using, as in the proof of Corollary~\ref{C4.5}, the duality between projective and inductive limits,     we conclude that
   \begin{equation}
\dEba= \varinjlim\limits_{B\to\infty}\bigl(\mathcal E^{\{a\}}_{b,B+}\bigr)',\qquad
\dEbap= \varinjlim\limits_{A\to0} \bigl(\mathcal E^{a,A-}_{\{b\}}\bigr)' ,
\notag
 \end{equation}
because the semi-reflexivity of  $\dEba$ and  $\dEbap$ implies that the strong topology on $\dEba$ and  $\dEbap$ coincides with the Mackey topology.
   It suffices to show now that the maps
   $\bigl(\mathcal E^{\{a\}}_{b,B+}\bigr)'\to {\mathcal C}_\hbar\bigl(\Sba\bigr)$  and  $\bigl(\mathcal E^{a,A-}_{\{b\}}\bigr)'\to {\mathcal C}_\hbar\bigl(\sba\bigr)$ are continuous for every $B>0$ and every $A>0$. We note that for any fixed $g\in \Sba$, the graphs of the maps
   \begin{equation}
 \bigl(\mathcal E^{\{a\}}_{b,B+}\bigr)'\to \Sba\colon v\to v\ast_\hbar g ,\qquad  \bigl(\mathcal E^{\{a\}}_{b,B+}\bigr)'\to  \Sba\colon v\to g\ast_\hbar v
\label{6.4}
 \end{equation}
  are closed. Indeed, if $v_\nu$ is a net in $\bigl(\mathcal E^{\{a\}}_{b,B+}\bigr)'$ such that $v_\nu\to v\in\bigl(\mathcal E^{\{a\}}_{b,B+}\bigr)'$ and $v_\nu\ast_\hbar g\to f\in \Sba$, then for any $h\in \Sba$, we have
   \begin{equation}
 \int f(\zeta) h(\zeta)d\zeta= \lim_\nu \langle v_\nu\ast_\hbar g, h\rangle= \lim_\nu\langle v_\nu,\check{g}\ast_\hbar
h\rangle=\langle v, \check{g}\ast_\hbar
h\rangle=\langle v\ast_\hbar g, h\rangle,
 \notag
 \end{equation}
hence   $f=v\ast_\hbar g$. Consequently, the maps~\eqref{6.4} are continuous by the closed graph theorem. The rest of the proof is similar to that of Theorem~\ref{T5.2}.
Let $v_n$ be a sequence in $\bigl(\mathcal E^{\{a\}}_{b,B+}\bigr)'$ converging to zero. Then $v_n\ast_\hbar g\to0$ and  $g\ast_\hbar v_n\to0$ for any $g\in  \Sba$ and the convergence is  uniform on every precompact subset of $\Sba$ by the Banach-Steinhaus theorem.  Since $\Sba$ is a Montel space, every its bounded subset is precompact. Hence, $v_n\to 0$ in the topology of ${\mathcal C}_\hbar\bigl(\Sba\bigr)$. Analogous arguments show that the second of inclusions~\eqref{6.2} is also continuous, which completes the proof. \end{proof}

It is clear from the definitions~\eqref{2.14}-\eqref{2.16} that the Palamodov spaces $\Eab$ and $\eab$  are  naturally embedded in $\dSab$, whereas  $\mathcal E^{(b)}_{\{a\}}$ and $\breve{\mathcal E}^{(b)}_{\{a\}}$ are naturally embedded in $\dsab$.

 \begin{theorem} \label{T6.2} If $b_n\subset a_n$, then the following continuous embeddings hold:
 \begin{equation}
 \eab\hookrightarrow {\mathscr M}_\hbar\bigl(\Sab\bigr),\qquad \eabp\hookrightarrow {\mathscr M}_\hbar\bigl(\sab\bigr).
\label{6.5}
 \end{equation}
\end{theorem}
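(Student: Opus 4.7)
The plan is to reduce Theorem~\ref{T6.2} to the twisted-convolution embedding of Theorem~\ref{T6.1} via the Fourier transform. By Section~\ref{S5}, under the hypothesis $b_n\subset a_n$, the Fourier transform intertwines $(\Sab,\star_\hbar)$ with $(\Sba,\ast_\hbar)$ (up to the factor $(2\pi)^{-d}$), and similarly $(\sab,\star_\hbar)$ with $(\sba,\ast_\hbar)$. Consequently $F$ induces topological algebra isomorphisms $\mM_\hbar\bigl(\Sab\bigr)\simeq\mathcal C_\hbar\bigl(\Sba\bigr)$ and $\mM_\hbar\bigl(\sab\bigr)\simeq\mathcal C_\hbar\bigl(\sba\bigr)$. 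Combined with the continuous inclusions $\dEba\hookrightarrow\mathcal C_\hbar\bigl(\Sba\bigr)$ and $\dEbap\hookrightarrow\mathcal C_\hbar\bigl(\sba\bigr)$ from Theorem~\ref{T6.1}, the embeddings~\eqref{6.5} will follow once we establish continuous Fourier embeddings $F\colon\eab\hookrightarrow\dEba$ and $F\colon\eabp\hookrightarrow\dEbap$.

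The key Fourier embedding $F\colon\eab\hookrightarrow\dEba$ is obtained from Corollary~\ref{C4.5} together with the Fourier identification of convolutor and multiplier spaces recalled in Section~\ref{S2}. Corollary~\ref{C4.5} asserts that $\eab$ is canonically the strong dual of $C\bigl(\Sab\bigr)$, and $F$ carries $C\bigl(\Sab\bigr)$ isomorphically onto the pointwise multiplier algebra $M\bigl(\Sba\bigr)$; thus $F$ identifies $\eab$ with the strong dual of $M\bigl(\Sba\bigr)$. I then claim that $\Eba$ embeds continuously and densely into $M\bigl(\Sba\bigr)$. Continuity --- the statement that every element of $\Eba$ is a pointwise multiplier of $\Sba$ --- is checked by the Leibniz rule combined with the weight-function inequalities~\eqref{2.11}--\eqref{2.13}, exploiting the arbitrariness of the growth parameter $A$ in the definition~\eqref{2.15} of $\Eba$. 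Density follows because $\Sba$ is dense in $\Eba$ (by Lemmas~3 and~4 of~\cite{S2019-2}) and $\Sba$ is in turn dense in $M\bigl(\Sba\bigr)$, which is seen via pointwise multiplication by cutoffs $e_n\in\Sba$ with $e_n\to 1$. Dualizing a continuous dense inclusion produces a continuous injection of the strong dual of $M\bigl(\Sba\bigr)$ into $\dEba$, and pulling back by $F^{-1}$ yields the continuous embedding $F\colon\eab\hookrightarrow\dEba$. The projective case $F\colon\eabp\hookrightarrow\dEbap$ is treated analogously using the second part of Corollary~\ref{C4.5}.

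Chaining the Fourier embeddings with Theorem~\ref{T6.1} and the Fourier algebra isomorphisms from the first paragraph yields the two embeddings~\eqref{6.5}. The principal technical obstacle is the continuous dense inclusion $\Eba\hookrightarrow M\bigl(\Sba\bigr)$: continuity requires a careful balancing of growth against smoothness using the submultiplicative bounds~\eqref{2.11} and~\eqref{2.13}, together with the freedom to enlarge $A$ at the cost of a larger $B(A)$ within the definition of $\Eba$. Without this delicate verification one would only obtain a continuous map of $\eab$ into $M\bigl(\Sba\bigr)^*$, which is strictly larger than $\dEba$ in general, and the passage to $\mathcal C_\hbar\bigl(\Sba\bigr)$ through Theorem~\ref{T6.1} would not be available.
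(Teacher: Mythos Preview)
Your approach is essentially the same as the paper's: both arguments chain Corollary~\ref{C4.5} (identifying $\eab$ with $C'\bigl(\Sab\bigr)$, hence via Fourier with $M'\bigl(\Sba\bigr)$), the continuous dense inclusion $\Eba\hookrightarrow M\bigl(\Sba\bigr)$ (which you sketch directly and the paper cites as Theorem~2 of~\cite{S2019-2}), its adjoint $M'\bigl(\Sba\bigr)\hookrightarrow\dEba$, and Theorem~\ref{T6.1}, then Fourier-transform back. The only step you omit that the paper carries out is the verification that the composite embedding so constructed coincides with the \emph{natural} inclusion $\eab\hookrightarrow\dSab$, $h\mapsto\bigl(f\mapsto\int hf\bigr)$; the paper checks this by unwinding the isomorphism $\mathbf i'$ of Corollary~\ref{C4.5} on test functions. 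One small slip in your final paragraph: dualizing $\Eba\hookrightarrow M\bigl(\Sba\bigr)$ gives $M'\bigl(\Sba\bigr)\hookrightarrow\dEba$, so $M'\bigl(\Sba\bigr)$ sits \emph{inside} $\dEba$, not the other way around.
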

 \begin{proof}  Theorem~2 in~\cite{S2019-2} shows that the functions of  $\Eba$ are pointwise multipliers for $\Sba$, the functions of $\Ebap$ are pointwise multipliers for  $\sba$, and the inclusion maps
   \begin{equation}
  \Eba\to M\bigl(\Sba\bigr),\qquad  \Ebap\to M\bigl(\sba\bigr),
\label{6.6}
\end{equation}
 are continuous. These maps clearly have dense ranges (and are even surjective, but we will not prove this here), and their adjoints
 \begin{equation}
M'\bigl(\Sba\bigr) \to  \dEba ,\qquad M'\bigl(\sba\bigr) \to  \dEbap.
\label{6.7}
\end{equation}
are  continuous and injective.
The compositions of the canonical inclusions $\dEba\hookrightarrow\dSba$ and  $\dEbap\hookrightarrow\dsba$ with their respective maps in~\eqref{6.7} are the adjoints of the canonical inclusions  $\Sba\hookrightarrow M\bigl(\Sba\bigr)$ and
$\sba\hookrightarrow M\bigl(\sba\bigr)$. If $b_n\subset a_n$, then~\eqref{6.7} in combination with Theorem~\ref{T6.1} gives
 \begin{equation}
M'\bigl(\Sba\bigr) \hookrightarrow  \mathcal C_\hbar\bigl(\Sba\bigr) ,\qquad M'\bigl(\sba\bigr) \hookrightarrow   \mathcal C_\hbar\bigl(\sba\bigr).
\label{6.8}
\end{equation}
After Fourier transforming we obtain
 \begin{equation}
C'\bigl(\Sab\bigr) \hookrightarrow  \mathscr M_\hbar\bigl(\Sab\bigr) ,\qquad C'\bigl(\sab\bigr) \hookrightarrow   \mathscr M_\hbar\bigl(\sab\bigr).
\label{6.9}
\end{equation}
By Corollary~\ref{C4.5},  $\eab$ is isomorphic to  $C'\bigl(\Sab\bigr)$ and this isomorphism is implemented by the adjoint of the map
 $\mathbf i\colon u\to \tilde u$ from Theorem~\ref{T4.3}. Let  $\mathbf j$  denote the natural embedding of $\Sab$ into $C\bigl(\Sab\bigr)$.
    The composition $\mathbf j'\circ \mathbf i'$ of the adjoint maps is
   precisely
   the natural embedding of $\eab$ into $\dSab$ because
$\langle (\mathbf j'\circ\mathbf i') (h), f\rangle=\langle h, \mathbf i(\mathbf j(f))\rangle$ for any $h\in \eab$ and  $f\in \Sab$ and by the definition~\eqref{4.1} we have for $u= \mathbf j (f)$
   \begin{equation}
   \langle \mathbf i(\mathbf j (f)), h\rangle=\int\!\int f(x)h(x)f_0(\xi-x)dxd\xi =\int f(x)h(x)dx.
   \notag
   \end{equation}
 Similarly, the second of embeddings~\eqref{6.9}, together with Corollary~\ref{C4.5}, implies the second of  embeddings~\eqref{6.5}. \end{proof}

\begin{remark}\label{R6.3}
  Unlike $\eab$, the space $\Eab$ is not contained in $\mathscr M_\hbar\bigl(\Sab\bigr)$ and $\Eabp$ is not contained in $\mathscr M_\hbar\bigl(\sab\bigr)$. In particular, the function
 $e^{(2i/\hbar)p\cdot q}$, where $(p,q)$ are symplectic coordinates on $\oR^{2d}$, belongs to $\mathcal E^{\{b\}}_{(a)}(\oR^{2d})$, but does not belong to $\mathscr M_\hbar\bigl(S_{\{a\}}^{\{b\}}(\oR^{2d})\bigr)$, see Proposition~6 in~\cite{S2012-II}.
\end{remark}
In the important case of the Fourier-invariant spaces of type $S$, we obtain  the following additional result.

 \begin{corollary}\label{C6.4} If $\hbar\ne0$,  then $\breve{\mathcal E}^{\{a\}}_{(a)}$ is contained in ${\mathcal C}_\hbar\bigl(S_{\{a\}}^{\{a\}}\bigr)$ and $\breve{\mathcal E}^{(a)}_{\{a\}}$ is contained in ${\mathcal C}_\hbar\bigl(S_{(a)}^{(a)}\bigr)$ with continuous inclusions.
\end{corollary}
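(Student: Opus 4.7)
The plan is to combine \refT{6.2} with a direct oscillatory integral estimate that makes essential use of $\hbar\ne 0$. Applying \refT{6.2} with $b=a$ yields the continuous inclusions $\breve{\mathcal E}^{\{a\}}_{(a)}\hookrightarrow \mathscr M_\hbar\bigl(S^{\{a\}}_{\{a\}}\bigr)$ and $\breve{\mathcal E}^{(a)}_{\{a\}}\hookrightarrow \mathscr M_\hbar\bigl(S^{(a)}_{(a)}\bigr)$; the Fourier transform, being a topological self-isomorphism of each Fourier-invariant space, provides algebra isomorphisms $F\colon \mathscr M_\hbar\bigl(S^{\{a\}}_{\{a\}}\bigr)\to \mathcal C_\hbar\bigl(S^{\{a\}}_{\{a\}}\bigr)$ and analogously for $S^{(a)}_{(a)}$. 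However $F$ sends $h$ to $\widehat h$, which is typically only an ultradistribution (already for polynomial $h$), so the corollary requires a direct verification that the same function $h$ belongs to $\mathcal C_\hbar$ and not merely that its Fourier transform does.

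For $h\in\breve{\mathcal E}^{\{a\}}_{(a)}$ and $g\in S^{\{a\}}_{\{a\}}$, I would first substitute $\xi=\zeta-\zeta'$ in~\eqref{5.1} and use $\zeta\cdot J\zeta=0$ to obtain
\[
(h\ast_\hbar g)(\zeta)=\int h(\zeta-\xi)\,g(\xi)\,e^{-(i\hbar/2)\zeta\cdot J\xi}\,d\xi,
\]
and then exploit the oscillatory phase. Since $\partial_\xi^\alpha e^{-(i\hbar/2)\zeta\cdot J\xi}=\bigl(-(i\hbar/2)J^T\zeta\bigr)^\alpha e^{-(i\hbar/2)\zeta\cdot J\xi}$, repeated integration by parts gives
\[
\bigl|(J^T\zeta)^\alpha(h\ast_\hbar g)(\zeta)\bigr|\le (2/|\hbar|)^{|\alpha|}\int\bigl|\partial_\xi^\alpha\bigl(h(\zeta-\xi)g(\xi)\bigr)\bigr|\,d\xi.
\]
Applying the Leibniz rule together with the $\mathcal E^{a,B}_{a,A}$-bound on $h$, the $S^{a,B'}_{a,A'}$-bound on $g$, the inequality $a_{|\alpha-\gamma|}a_{|\gamma|}\le a_{|\alpha|}$, and the weight estimates~\eqref{2.11}--\eqref{2.13}, one checks that for $A'<A/(2H)$ the right-hand side is bounded by $C(2(B+B')/|\hbar|)^{|\alpha|}a_{|\alpha|}\,w_a(2|\zeta|/A)$ times the norms of $h$ and $g$. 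Choosing $A$ large relative to $B+B'$ and $|\hbar|$ (possible since $h\in\mathcal E^{a,B}_{a,A}$ for every $A>0$) and optimizing over $\alpha$ via~\eqref{2.10} and~\eqref{2.13} then produces $|(h\ast_\hbar g)(\zeta)|\le C_1\|h\|^A_B\|g\|_{A',B'}/w_a(|\zeta|/A_1)$. The smoothness bound $|\partial_\zeta^\beta(h\ast_\hbar g)|\le C_2 B_2^{|\beta|}a_{|\beta|}/w_a(|\zeta|/A_1)$ is established analogously: each $\partial_\zeta$ either hits $g(\zeta+\cdot)$, preserving its class, or the phase, producing a factor $(\hbar/2)J\xi$ that is absorbed into $h(\zeta-\xi)$ by \refL{4.2}. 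Continuity of the inclusion follows because the constants scale linearly in the relevant norms.

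The principal obstacle is the bookkeeping of the parameters $A,A',B,B'$ and the $|\hbar|$-dependence. The hypothesis $\hbar\ne 0$ enters precisely at the optimization step, where the factor $(2/|\hbar|)^{|\alpha|}a_{|\alpha|}|\zeta|^{-|\alpha|}$, after infimization in $\alpha$, yields $w_a$-type decay that dominates the growth $w_a(2|\zeta|/A)$ coming from $h$; for $\hbar=0$ this mechanism is lost and the inclusion indeed fails in the untwisted case, since polynomials lie in $\breve{\mathcal E}^{\{a\}}_{(a)}$ but are not ordinary convolutors of $S^{\{a\}}_{\{a\}}$. The argument for $\breve{\mathcal E}^{(a)}_{\{a\}}\subset \mathcal C_\hbar\bigl(S^{(a)}_{(a)}\bigr)$ is strictly parallel: for $g\in S^{(a)}_{(a)}$ every pair $A',B'$ is admissible, and for $h\in\breve{\mathcal E}^{(a)}_{\{a\}}$ some small $A$ and arbitrarily small $B$ satisfying the above constraints are available, so the same chain of estimates applies.
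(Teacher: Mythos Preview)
Your direct oscillatory-integral route is workable, but the paper's proof is entirely different and much shorter: it reduces the corollary to \refT{6.2} via the symplectic Fourier transforms $F_J,\bar F_J$.  Integrating out one variable in~\eqref{1.1} gives the identities
\[
f\star_\hbar v=(\pi\hbar)^{-d}(F_Jf)\ast_{4/\hbar}v,\qquad v\star_\hbar g=(\pi\hbar)^{-d}v\ast_{4/\hbar}(\bar F_Jg),
\]
first for $f,g,v\in S^{\{a\}}_{\{a\}}$ and then, by density and separate continuity, for $v\in S^{\{a\}\prime}_{\{a\}}$.  Since \refT{6.2} places $\breve{\mathcal E}^{\{a\}}_{(a)}$ inside $\mathscr M_{\hbar'}\bigl(S^{\{a\}}_{\{a\}}\bigr)$ for \emph{every} $\hbar'$ and $F_J,\bar F_J$ are automorphisms of $S^{\{a\}}_{\{a\}}$, these identities immediately yield $\breve{\mathcal E}^{\{a\}}_{(a)}\subset\mathcal C_{4/\hbar'}\bigl(S^{\{a\}}_{\{a\}}\bigr)$ for every $\hbar'\ne0$, hence for every nonzero twisting parameter.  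Continuity then follows by the closed-graph/Banach--Steinhaus argument of \refT{6.1}.  No integration by parts, no parameter bookkeeping.

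Your sketch, while sound in spirit, has two loose ends.  First, in the representation you display, $g(\xi)$ carries no $\zeta$-dependence, so $\partial_\zeta$ hits $h(\zeta-\xi)$ or the phase, never ``$g(\zeta+\cdot)$''; if you want the phase factor absorbed into $h$ via \refL{4.2} you should instead use the form $\int h(\zeta')g(\zeta-\zeta')e^{(i\hbar/2)\zeta\cdot J\zeta'}d\zeta'$, where the phase derivative produces $J\zeta'$ with $\zeta'$ the argument of $h$.  Second, and more substantively, the decay bound (from integration by parts) and the smoothness bound (from $\partial_\zeta^\beta$) must be obtained \emph{jointly} as an estimate on $(J^{T}\zeta)^\alpha\partial_\zeta^\beta(h\ast_\hbar g)$; separately they only say that $h\ast_\hbar g$ decays while its derivatives are dominated by a \emph{growing} weight, which does not place $h\ast_\hbar g$ in $S^{\{a\}}_{\{a\}}$.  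The three-fold Leibniz expansion needed here goes through with~\eqref{2.3},~\eqref{2.4},~\eqref{2.11}--\eqref{2.13} and \refL{4.2}, but ``established analogously'' undersells the work; the paper's route sidesteps it entirely.
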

\begin{proof}  Here we use the symplectic Fourier transforms defined by
\begin{equation}
(F_J f)(y)\coloneq
(\pi\hbar)^{-d}\int\limits_{\oR^{2d}}  f(x)\,e^{-(2i/\hbar)x\cdot Jy} dx,\quad
(\bar F_J f)(y)\coloneq
(\pi\hbar)^{-d}\int\limits_{\oR^{2d}}  f(x)\,e^{(2i/\hbar)x\cdot Jy} dx.
\notag
\end{equation}
Performing the integration  over one of the variables $x'$ or $x''$ in~\eqref{1.1}, we obtain \begin{multline}
(f\star_\hbar g)(x)=
(\pi\hbar)^{-d}\int\limits_{\oR^{2d}}\! (F_Jf)(x'') g(x- x'')\,e^{(2i/\hbar)x\cdot Jx''} dx''=
 (\pi\hbar)^{-d}\big((F_J f)\ast_{4/\hbar} g\big)(x) \\
=(\pi\hbar)^{-d}\int\limits_{\oR^{2d}}\!  f (x- x') (\bar F_J g)(x') \,e^{(2i/\hbar)x'\cdot Jx} dx'= (\pi\hbar)^{-d}
\big( f\ast_{4/\hbar}(\bar F_J g)\big)(x).
\label{6.10}
\end{multline}
 The maps $v\to f\star v$ and  $v\to v\star g$ of  $S_{\{a\}}^{\{a\}\prime}$ into itself are the adjoints of the respective continuous maps
$h\to h\star f$ and $h\to g\star h$ of $S_{\{a\}}^{\{a\}}$ into itself and are therefore continuous. The twisted convolution products in~\eqref{6.10} have similar continuity properties. Furthermore, $S_{\{a\}}^{\{a\}}$ is dense in $S_{\{a\}}^{\{a\}\prime}$. Hence, it follows from~\eqref{6.10} that
\begin{equation}
f\star_\hbar v= (\pi\hbar)^{-d}(F_J f)\ast_{4/\hbar} v,\quad
v\star_\hbar g=(\pi\hbar)^{-d} v\ast_{4/\hbar}
(\bar F_J g)\quad\forall f,g\in S_{\{a\}}^{\{a\}},\, v\in S_{\{a\}}^{\{a\}\prime}.
 \notag
\end{equation}
  If $v\in \breve{\mathcal E}^{\{a\}}_{(a)}$, then $f\star_\hbar v\in S_{\{a\}}^{\{a\}}$ and $v\star_\hbar g\in S_{\{a\}}^{\{a\}}$ for any $\hbar$ by Theorem~\ref{T6.2}. Because $F_J$ and $\bar F_J$  are automorphisms of $S_{\{a\}}^{\{a\}}$, we conclude that  $\breve{\mathcal E}^{\{a\}}_{(a)}\subset  {\mathcal C}_\hbar\bigl(S_{\{a\}}^{\{a\}}\bigr)$ for $\hbar\ne0$. The reasoning used in proving Theorem~\ref{T6.1} shows that this inclusion is continuous. The proof for the case of  $S_{(a)}^{(a)}$  is entirely analogous.
\end{proof}
Theorem~\ref{T6.2} can be extended to the multiplier algebras associated with the quantization map
 \begin{equation}
f \longmapsto \Op_{S}(f)=(2\pi)^{-d}\int_{\oR^{2d}}\! \widehat f(\zeta) e^{(i\hbar/4)\zeta\cdot S\zeta} T_\zeta^\hbar  d\zeta,
 \label{6.11}
\end{equation}
 where $S$ is a real symmetric matrix and $T_\zeta^\hbar$ is a Weyl system of unitary operators satisfying
\begin{equation}
T^\hbar_\zeta T^\hbar_{\zeta'}= e^{-(i\hbar/2)\zeta\cdot
J\zeta'}T^\hbar_{\zeta+\zeta'}.
 \label{6.12}
\end{equation}
 This generalization covers, in particular, various operator orderings that differ from the  Weyl (totally symmetric) ordering corresponding to  $S=0$ when we use the notation $\Op(f)$.
 The quantization map~\eqref{6.11} implies the following composition law for symbols
 \begin{equation}
(f\star_{\hbar, S} g)(x)=(\pi\hbar)^{-2d}\int_{\oR^{4d}}
f\left(x-(J+ S)x'\right)g(x-x'')e^{(2i/\hbar)x'\cdot x''}dx'dx''.
\label{6.13}
\end{equation}
The Fourier transform
converts~\eqref{6.13} into the deformed convolution
\begin{equation}
(\widehat f\ast_{\hbar, S}\widehat
g)(\zeta)= \int_{\oR^{2d}} \widehat f(\zeta')\widehat
g(\zeta-\zeta')e^{-(i\hbar/2)\zeta'\cdot (J+ S)(\zeta-\zeta')}d\zeta'.
 \label{6.14}
\end{equation}
(multiplied by $(2\pi)^{-d}$). It  follows from the definition~\eqref{6.11}  and relation~\eqref{6.12} that
\begin{equation}
\Op_{S}(f\star_{\hbar,S} g)=\Op_{S}(f)\Op_{S}(g),
 \notag
\end{equation}
 as can  be verified by using the symmetry of the matrix $S$ and the antisymmetry of $J$, see, e.g.,  Sect.~4 in~\cite{S2019} for more details.
   Let $j_S$ be the operator of multiplication by $e^{(i\hbar/4)\zeta\cdot S\zeta}$. It is easy to see that the deformed convolution~\eqref{6.14} and the twisted convolution~\eqref{5.1} are connected by the relation
\begin{equation}
 j_S(g_1\ast_{\hbar, S} g_2)= j_S(g_1)\ast_\hbar j_S(g_2).
 \notag
\end{equation}
As already noted above, Theorem~1 in~\cite{S2019} shows that under the condition  $b_n\subset a_n$, the function   $e^{(i\hbar/4)\zeta\cdot S\zeta}$ is a pointwise multiplier for $\Sba$ and for $\sba$. Hence these spaces are algebras under the deformed convolution~\eqref{6.14}  and the map  $g(\zeta)\to e^{(i\hbar/4)\zeta\cdot  S\zeta} g(\zeta)$ is an algebraic and topological isomorphism of $\bigl(\Sba,\ast_{\hbar, S}\bigr)$ onto $\bigl(\Sba,\ast_\hbar\bigr)$ and of $\bigl(\sba,\ast_{\hbar, S}\bigr)$ onto $\bigl(\sba,\ast_\hbar\bigr)$. The spaces of multipliers with respect to the products  $\ast_{\hbar, S}$ and $\star_{\hbar, S}$ are defined in complete analogy with~\eqref{5.4}--\eqref{5.7}, \eqref{5.13}, \eqref{5.14}, and we obtain another corollary of Theorem~\ref{T6.2}.

\begin{corollary}\label{C6.5}  Let  $b_n\subset a_n$,  let $S$ be a real symmetric matrix, and let ${\mM}^S_{\hbar}\bigl(\Sab\bigr)$ and ${\mM}^S_{\hbar}\bigl(\sab\bigr)$ be, respectively, the algebras of two-sided multipliers for  $\bigl(\Sab, \star_{\hbar, S}\bigr)$  and  $\bigl(\sab, \star_{\hbar, S}\bigr)$, where $\star_{\hbar, S}$ is defined by~\eqref{6.13}. Then we have  the  continuous embeddings
\begin{equation}
 \eab\hookrightarrow {\mM}^S_\hbar\bigl(\Sab\bigr),\qquad \eabp\hookrightarrow {\mM}^S_\hbar\bigl(\sab\bigr).
\label{6.15}
 \end{equation}
 \end{corollary}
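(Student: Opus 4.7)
The plan is to reduce Corollary~\ref{C6.5} to Theorem~\ref{T6.2} using the intertwining identity
\[
j_S(g_1 \ast_{\hbar, S} g_2) = j_S(g_1) \ast_\hbar j_S(g_2),
\]
where $j_S$ denotes multiplication by $e^{(i\hbar/4)\zeta\cdot S\zeta}$. Under the hypothesis $b_n \subset a_n$, Theorem~1 of~\cite{S2019} (invoked in the text preceding the corollary) ensures that both $e^{(i\hbar/4)\zeta\cdot S\zeta}$ and $e^{-(i\hbar/4)\zeta\cdot S\zeta}$ are pointwise multipliers of $\Sba$ and $\sba$. Hence $j_S$ is a topological automorphism of $\Sba$, $\sba$, $M\bigl(\Sba\bigr)$, $M\bigl(\sba\bigr)$, and, via duality, of $\dSba$, $\dsba$, $M'\bigl(\Sba\bigr)$, $M'\bigl(\sba\bigr)$.

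Given $u \in \eab$ and $f \in \Sab$, I would verify $u \star_{\hbar, S} f \in \Sab$ by passing to the Fourier side. Writing
\[
\widehat{u \star_{\hbar, S} f} = (2\pi)^{-d}\, \hat u \ast_{\hbar, S} \hat f = (2\pi)^{-d}\, j_S^{-1}\bigl(j_S \hat u \ast_\hbar j_S \hat f\bigr),
\]
where the second equality is the distributional extension of the intertwining. By Corollary~\ref{C4.5} combined with the Fourier isomorphism $C\bigl(\Sab\bigr) \cong M\bigl(\Sba\bigr)$, the canonical inclusion $\eab \hookrightarrow \dSab$ corresponds under Fourier to the inclusion $F(\eab) \subset M'\bigl(\Sba\bigr)$, so $\hat u \in M'\bigl(\Sba\bigr)$. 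Since $j_S$ preserves $M'\bigl(\Sba\bigr)$, also $j_S \hat u \in M'\bigl(\Sba\bigr) \hookrightarrow \mathcal C_\hbar\bigl(\Sba\bigr)$ by~\eqref{6.8}. Combined with $j_S \hat f \in \Sba$, this yields $j_S \hat u \ast_\hbar j_S \hat f \in \Sba$, and applying $j_S^{-1}$ and then the inverse Fourier transform keeps us within $\Sab$. The statement $f \star_{\hbar, S} u \in \Sab$ is proved identically using the right-convolution version of~\eqref{6.8}. The second embedding $\eabp \hookrightarrow \mathscr M^S_\hbar\bigl(\sab\bigr)$ is handled by the same chain of reasoning, with $\sab$, $\sba$, $\eabp$ in place of $\Sab$, $\Sba$, $\eab$.

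For continuity of the embedding, I would track the topologies along the chain $\eab \to M'\bigl(\Sba\bigr) \xrightarrow{j_S} M'\bigl(\Sba\bigr) \hookrightarrow \mathcal C_\hbar\bigl(\Sba\bigr) \xrightarrow{j_S^{-1}} \mathcal C^S_\hbar\bigl(\Sba\bigr) \xrightarrow{F^{-1}} \mathscr M^S_\hbar\bigl(\Sab\bigr)$: the first map is the adjoint of the continuous map from Corollary~\ref{C4.5}, $j_S$ is a topological automorphism, Theorem~\ref{T6.1} supplies the continuity of the middle inclusion, and the Fourier transform is a topological isomorphism between the twisted-convolution and Moyal-multiplier algebras. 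The principal technical obstacle I anticipate is verifying rigorously that the intertwining identity extends from $\Sba \times \Sba$ to $M'\bigl(\Sba\bigr) \times \Sba$, i.e.\ that $\hat u \ast_{\hbar, S} \hat f$ defined by duality (as in the paper's definition for $\ast_\hbar$) coincides with $j_S^{-1}(j_S \hat u \ast_\hbar j_S \hat f)$. This should follow from sequential continuity of both sides in $\hat u$ combined with a density argument, or by a direct dual computation against test functions in $\Sba$.
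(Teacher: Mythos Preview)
Your proposal is correct and follows essentially the same route as the paper: both arguments hinge on the fact that $j_S$ is a topological automorphism of $M'\bigl(\Sba\bigr)$ (resp.\ $M'\bigl(\sba\bigr)$) and that $j_S^{-1}$ carries $\mathcal C_\hbar\bigl(\Sba\bigr)$ onto $\mathcal C^S_\hbar\bigl(\Sba\bigr)$, so that the embedding~\eqref{6.8} immediately transfers to the $S$-deformed setting, after which Fourier transform and Corollary~\ref{C4.5} give~\eqref{6.15}. The paper's proof simply states this at the level of spaces in three lines, whereas you unpack it element by element and flag the extension of the intertwining identity to $M'\bigl(\Sba\bigr)\times\Sba$ as a technical step; that extension is immediate from the duality definitions of $\ast_\hbar$ and $\ast_{\hbar,S}$ once $j_S$ is known to be an automorphism of $\Sba$, so no density argument is actually needed.
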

 \begin{proof} Since $\mathcal C^S_\hbar\bigl(\Sba\bigr)$ and $\mathcal C^S_\hbar\bigl(\sba\bigr)$ are obtained from the algebras~\eqref{5.14} by multi\-plication by $e^{-(i\hbar/4)\zeta\cdot  S\zeta}$ and the spaces $M'\bigl(\Sba\bigr)$ and $M'\bigl(\sba\bigr)$ are invariant under this operation,
 it follows from~\eqref{6.8} that
 \begin{equation}
M'\bigl(\Sba\bigr) \hookrightarrow  \mathcal C^S_\hbar\bigl(\Sba\bigr) ,\qquad M'\bigl(\sba\bigr) \hookrightarrow   \mathcal C^S_\hbar\bigl(\sba\bigr).
\notag
\end{equation}
After Fourier transforming and applying Corollary~\ref{C4.5}, we obtain~\eqref{6.15}.
\end{proof}

\section{Concluding remarks}
\label{S7}
The quantization map~\eqref{6.11} extends uniquely to a continuous bijection from $S_{\{a\}}^{\{a\}\prime}(\oR^{2d})$ onto the space $\mathcal L\bigl(S_{\{a\}}^{\{a\}}(\oR^d), S_{\{a\}}^{\{a\}\prime}(\oR^d)\bigr)$, as well as to a continuous bijection from  $S_{(a)}^{(a)\prime}(\oR^{2d})$ onto  $\mathcal L\bigl(S_{(a)}^{(a)}(\oR^d),S_{(a)}^{(a)\prime}(\oR^d)\bigr)$ (for a proof, see Theorem~2 in~\cite{S2012-II}). These extensions are analogous to the extension of the Weyl map to tempered distributions in~\cite{Fol,H3}.
 In this way  $\Op_S(u)$ is well defined for any $u\in S_{\{a\}}^{\{a\}\prime}(\oR^{2d})$ as a continuous linear map of $S_{\{a\}}^{\{a\}}(\oR^d)$ into  $S_{\{a\}}^{\{a\}\prime}(\oR^d)$ and coincides with the operator whose Weyl symbol is $F^{-1}\bigl[\hat u e^{-(i\hbar/4)\zeta\cdot S\zeta}\bigr]$. It follows directly from the definitions 
 that the algebra ${\mM}^S_\hbar \bigl(S_{\{a\}}^{\{a\}}(\oR^{2d})\bigr)$ is transformed by the extended map~\eqref{6.11} into the same set of operators as  ${\mM}_\hbar\bigl(S_{\{a\}}^{\{a\}}(\oR^{2d})\bigr)$  by the Weyl map. In addition, Corollary~\ref{C4.6} implies that the image of  $\breve{\mathcal E}^{\{a\}}_{(a)}(\oR^{2d})$ under the map $u\to \Op_S(u)$ is the same as its image under the Weyl map.
 Analogous statements are true regarding ${\mM}^S_\hbar\bigl(S_{(a)}^{(a)}(\oR^{2d}\bigr)$ and $\breve{\mathcal E}^{(a)}_{\{a\}}(\oR^{2d})$.

Theorem~3 in~\cite{S2012-II} shows that the Weyl map transforms the algebra ${\mM}_{\hbar, L}\left(S^\alpha_\alpha(\oR^{2d})\right)$  of left Moyal multipliers for $S^\alpha_\alpha(\oR^{2d})$ into the algebra of operators mapping  $S^\alpha_\alpha(\oR^{d})$ continuously into itself.
In~\cite{S2020}, this result is extended to the general case of spaces  $S_{(a)}^{(a)}$ and $S_{(a)}^{(a)}$, and it implies, in particular, that the pseudodifferential operators whose Weyl symbols belong to $\breve{\mathcal E}^{\{a\}}_{(a)}$ are continuous on  $S_{\{a\}}^{\{a\}}$ and the operators with Weyl symbols in $\breve{\mathcal E}^{(a)}_{\{a\}}$ are continuous on $S_{(a)}^{(a)}$.
 Pseudodifferential operators with symbols in the spaces $\Gamma^\infty_s$ and $\Gamma^\infty_{0,s}$, which coincide with   $\breve{\mathcal E}^{\{a\}}_{(a)}$ and  $\breve{\mathcal E}^{(a)}_{\{a\}}$ for $a_n=n^{sn}$, were studied by Cappiello and Toft in~\cite{CT}. Their continuity properties are proved there  by a different method based on using modulation spaces and the short time Fourier transform. Similar results on the continuity properties of pseudodifferential operators in the Gelfand-Shilov setting were also derived in another way by Prangoski~\cite{P}, but for slightly smaller symbol classes than   $\breve{\mathcal E}^{\{a\}}_{(a)}$ and $\breve{\mathcal E}^{(a)}_{\{a\}}$. It follows from the above that the continuity properties of operators obtained  from $\breve{\mathcal E}^{\{a\}}_{(a)}$ and $\breve{\mathcal E}^{(a)}_{\{a\}}$ by  applying the map~\eqref{6.11}  with $S\ne 0$ are the same as operators obtained by applying the Weyl map and do not require a separate consideration.

Besides  the spaces~\eqref{2.15} and~\eqref{2.16}, Palamodov introduced in~\cite{P1962} two more classes of spaces, which in our notation are defined by $\mathcal E^{(b)}_{(a)}=\bigcap_{A\to\infty,B\to0}\mathcal E^{b,B}_{a,A}$ and $\mathcal E^{\{b\}}_{\{a\}}=\bigcup_{A\to0,B\to\infty}\mathcal E^{b,B}_{a,A}$. The dual of $\mathcal E^{(b)}_{(a)}$ is the space of convolutors for $S^{\{a\}}_{(b)}=\bigcap_{B\to0}\bigcup_{A\to\infty}S^{b,B}_{a,A}$ and the dual of $\mathcal E^{\{b\}}_{\{a\}}$ is the space of convolutors for $S^{\{b\}}_{(a)}=\bigcap_{A\to0}\bigcup_{B\to\infty}S^{b,B}_{a,A}$. We note that the symbol class  denoted in~\cite{CT} by $\Gamma^\infty_{1,s}$ is, in our notation,  $\mathcal E^{\{a\}}_{\{a\}}$ for $a_n=n^{sn}$. It is clear that $\mathcal E^{(b)}_{(a)}\subset \eab\bigcap\eabp$.
  Therefore, the pseudodifferential operators with symbols in  $\mathcal E^{(a)}_{(a)}(\oR^{2d})$ are continuous from $S^{\{a\}}_{\{a\}}(\oR^d)$ to $S^{\{a\}}_{\{a\}}(\oR^d)$ and from $S^{(a)}_{(a)}(\oR^d)$ to $S^{(a)}_{(a)}(\oR^d)$. In the case of symbols in $\mathcal E^{\{a\}}_{\{a\}}(\oR^{2d})$, the situation  is completely different because $S_{(a)}^{\{a\}}(\oR^{2d})$ is not closed under the Weyl-Moyal product. Since $\mathcal E^{\{a\}}_{\{a\}}\subset S^{(a)\prime}_{(a)}$, its corresponding operators  are well defined as elements of  $\mathcal L\bigl(S_{(a)}^{(a)}(\oR^d),S_{(a)}^{(a)\prime}(\oR^d)\bigr)$, but  $\mathcal E^{\{a\}}_{\{a\}}(\oR^{2d})$ includes functions $u$ such that the image of $S_{(a)}^{(a)}(\oR^d)$ under $\Op(u)$ is not contained in  $L^2(\oR^d)$.

\end{document}